  \tikzstyle{every picture}=[
\begin{document}
%%-----------------------------
%%      the top matter
%%-----------------------------
\title{Bottom-Up Derivatives of Tree Expressions}

\author{
  Samira Attou\inst{1}
  \and Ludovic Mignot\inst{2}
  \and Djelloul Ziadi\inst{2}
}
\institute{
  USTHB, Faculty of Mathematics, RECITS Laboratory,\\
  BP 32, El Alia, 16111 Bab Ezzouar, Algiers, Algeria \email{sattou@usthb.dz}
  \and Groupe de Recherche Rouennais en Informatique Fondamentale,\\
  Université de Rouen Normandie,\\ Avenue de l'Université, 76801 Saint-Étienne-du-Rouvray, France.\\
  and Associated Member of RECITS Laboratory, CATI Team, USTHB, Algiers, Algeria.\\
  \email{\{ludovic.mignot,djelloul.ziadi\}@univ-rouen.fr}
}
\date{\today}
\maketitle
\begin{abstract}
  In this paper, we extend the notion of (word) derivatives and partial derivatives due to (respectively) Brzozowski and Antimirov to tree derivatives using already known inductive formulae of quotients.
  We define a new family of extended regular tree expressions (using negation or intersection operators), and we show how to compute a Brzozowski-like inductive tree automaton; the fixed point of this construction, when it exists, is the derivative tree automaton.
  Such a deterministic tree automaton can be used to solve the membership test efficiently: the whole structure is not necessarily computed, and the derivative computations can be performed in parallel.
  We also show how to solve the membership test using our (Bottom-Up) partial derivatives, without computing an automaton.
\end{abstract}

%
% \subjclass{68Q45}
%
\keywords{Regular tree expressions, derivatives tree automata, partial derivatives, bottom-Up derivatives.}
%%-----------------------------
%%      your text
%%-----------------------------
\section{Introduction}\label{se:int}
In 1956, Kleene~\cite{Kle56} gave a fundamental theorem in automata theory.
He showed that every regular expression \(E\) can be converted into a finite state machine that recognizes the same language as \(E\), and \emph{vice versa}.
A lot of methods have been proposed to provide the conversion of a given regular expression to a finite word automaton.
One of these approaches which appeared in 1964 was Brzozowski's~\cite{Brzo64} construction; the idea is to use  the notion of derivation to compute a deterministic automaton: the derivative of a regular expression \(E\) w.r.t.\ a word \(w\) is a regular expression that denotes the set of words \(w'\) such that \(ww'\) is denoted by \(E\).
This construction is not necessarily finite: the derivatives of a given regular expression may form an infinite set.
However, considering three equivalence rules (associativity, commutativity and idempotence of the sum), he proved that the set of (so called) similar derivatives is finite.

Antimirov~\cite{Anti96}, in 1996 introduced the partial derivation which is a similar operation to the one defined by Brzozowski; a partial derivative of a regular expression is no longer a regular expression but a set of regular expressions, that leads to the construction of a non-deterministic automaton, with at most~\((n+1)\) states where~\(n\) is the number of letters of the regular expression.
However, this operation is not defined for extended expressions (\emph{i.e.} regular expressions with negation or intersection operators)\footnote{this was achieved by Caron \emph{et al.} using clausal forms instead of sets~\cite{CCM14}}.

Some of these constructions have been extended to tree automata~\cite{ThatcherW68}.
Kuske and Meinecke~\cite{KuskeM11} in 2011, introduced an algorithm to convert a regular tree expression into a non-deterministic tree automaton in a Top-Down interpretation. This construction was inspired by Antimirov's construction.
In 2017, Champarnaud \emph{et al.}~\cite{CMOZ17} have extended the inductive  formulas of quotients to tree languages, following a Bottom-Up interpretation.
These notions of derivatives and quotients have practical and theoretical aspects.
From a practical point of view, this Bottom-Up interpretation can be related to the notion of contexts of trees, that have been studied in  functional programming (\emph{e.g.}  zippers~\cite{Huet97}).
From a theoretical point of view, this study belongs to a large project that aims to study the algorithmic similarities between word automata and tree automata in order to generalize these notions over other algebraic structures~\cite{LLMN19}.

In this paper, we define a new construction of tree automata  based on the notion of derivation of an extended tree expression.
%based on the notion of derivation, from an extended tree expression .
We also show how to extend the notion of partial derivation in a Bottom-Up way and that the previous construction cannot be applied directly with partial derivatives.
Notice that we leave the (technical) study of the finiteness of the set of derivatives and partial derivatives for a future work.

This paper, which is an extended version of~\cite{AMZ19}, is structured as follows: Section~\ref{sec:prelim} defines preliminaries and notations considered throughout this paper.
In Section~\ref{sec:quotients form}, we recall the Bottom-Up quotient formulas for trees and for languages defined in~\cite{CMOZ17}.
We explain in Section~\ref{sec:bool operation} how we can deal with the Boolean operations.
In Section~\ref{sec:tree expression}, we define the derivative formulas for an extended tree expression.
Using the sets of derivatives, in Section~\ref{sec: Auto Construction}, we show how to compute  a deterministic tree automaton from an extended tree expression that recognizes the same language.
We then extend the computation formulae to deal with sets of expressions instead of  a single expression and show their validity in Section~\ref{sec: part der}.
In Section~\ref{sec:appli}, we present a web application allowing the computation of Bottom-Up derivatives, partial derivatives, and both the derivative tree automaton and a classical non-deterministic inductive construction, where the complement is performed \emph{via} determinization.

\section{Preliminaries}\label{sec:prelim}
Let us first introduce some notations and preliminary definitions.
For any pattern matching, we denote by \( \_ \) the wildcard.

In the following of this paper, we consider a \emph{ranked alphabet} \( \Sigma = \bigcup_{k\in\mathbb{N}} \Sigma_k\), \emph{i.e.} a finite graded set of distinct symbols.
A symbol \(f\) in \(\Sigma_k\) is said to be \(k\)\emph{-ary}.

A \emph{tree} \(t\) over \( \Sigma \) is inductively defined by
% \begin{align*}
\(t =\varepsilon_j\) (an empty tree with no symbols) or \(t = f(t_1,\ldots,t_n)\),
% \end{align*}
where \(j\) is a positive integer, \(f\) is a symbol in \(\Sigma_n\), and \(t_1,\ldots,t_n\) are \(n\) trees over \( \Sigma \).
Moreover, we assume that for any integer~\(j\), the symbol \(\varepsilon_j\) appears at most once in a tree.
We denote by \(\mathrm{Ind}_\varepsilon(t)\) the (naturally ordered) set of integers~\(j\) such that \(\varepsilon_j\) appears in the tree~\(t\).
The tree \(t\) is \emph{\(k\)-ary} if \(k\) is the cardinal of \(\mathrm{Ind}_\varepsilon(t)\);
Precisely, \(t\) is \emph{nullary} if \(\mathrm{Ind}_\varepsilon(t)\) is empty.
Given an  integer~\(z\), we denote by \(\mathrm{Inc}_\varepsilon(z,t)\) the substitution of all symbols \(\varepsilon_x\) by \(\varepsilon_{x+z}\) in \(t\).
A tree language (a set of trees) \(L\) is \(k\)\emph{-homogeneous} if it only contains~\(k\)-ary trees \(t\) with the same \( \varepsilon \)-index (ordered) set, denoted by \(\mathrm{Ind}_\varepsilon(L)\) in this case.
The language \(L\) is \emph{homogeneous} if it is \(k\)-homogeneous for some \(k\).
We denote by \(T(\Sigma)\) the set of the trees over \(\Sigma \), and \({T(\Sigma)}_k\) the set of \(k\)-ary trees over \(\Sigma \).
\begin{example}\label{examplePrel}
  Let us consider a ranked alphabets $\Sigma=\Sigma_2\cup\Sigma_1\cup \Sigma_0$ where $f\in\Sigma_2$, $g\in\Sigma_1$ and $a\in\Sigma_0$. Let $t=f(f(t',\varepsilon_3), \varepsilon_1)$  and $t'=g(a)$ be two trees over $\Sigma$. Then, $t$ is $2$-ary because $\vert Ind_{\varepsilon}(t)\vert= \vert\{1,3\}\vert =2$ and $t'$ is $0$-ary  because $\vert Ind_{\varepsilon}(t')\vert= \vert \emptyset \vert=0$.

  Moreover, let us consider the languages $L_1=\{f(a,a), \; a\}$, $L_2=\{f(a,\varepsilon_1), \; \varepsilon_1 \}$ and $L_3=\{f(\varepsilon_1,\varepsilon_2)\}$.  Then, $L_1$ contains only $0$-ary trees than it is $0$-homogeneous, $L_2$ contains only $1$-ary trees then it is $1$-homogeneous and last $L_3$ contains only $2$-ary trees then  it is $2$-homogeneous.
\end{example}
Given a tree~\(t\) over an alphabet \( \Sigma \) with \( \mathrm{Ind}_\varepsilon(t)=\{e_1,\ldots,e_k\} \) and~\(k\) trees \(t_1,\ldots,t_k\) over \(\Sigma \), we denote by \(t\circ(t_1,\ldots,t_k)\) the tree obtained by substituting each \(\varepsilon_{e_i}\) by \(t_i\) in \(t\).
Given a~\(k\)-homogeneous language \(L\) over \(\Sigma \) and~\(k\) languages \((L_1,\ldots,L_k)\) over \(\Sigma \), we denote by~\(\circ \) the operation defined by
\begin{equation}\label{def compo lang}
  L\circ(L_1,\ldots,L_k) = \{t\circ(t_1,\ldots,t_k) \mid (t,t_1,\ldots,t_k) \in L\times L_1 \times \cdots \times L_k\}.
\end{equation}
Let \(L\) be a \(1\)-homogeneous language with \( \mathrm{Ind}_\varepsilon(L)=\{j\} \).
We denote by \(L^n\) the language inductively defined by
% \begin{align*}
\(L^0 =
% \begin{cases}
\{\varepsilon_j\},
L^n = L\circ L^{n-1}\),
% \end{cases}
% \end{align*}
for any integer~\(n > 0\), and we set
\begin{equation*}
  L^\circledast = \bigcup_{n\in\mathbb{N}} L^n.
\end{equation*}
Given a tree \(t\), a symbol \(a\) in \(\Sigma_0\) and a \(0\)-homogeneous language \(L'\), we denote by \(t\cdot_a L'\) the tree language inductively defined by
\begin{gather*}
  \begin{aligned}
    b\cdot_a L'                     & =
    \begin{cases}
      L'    & \text{ if } a = b, \\
      \{b\} & \text{ otherwise,}
    \end{cases}       &
    \qquad \varepsilon_j \cdot_a L' & = \{\varepsilon_j\},
  \end{aligned}\\
  f(t_1,\ldots,t_n) \cdot_a L' = f(t_1\cdot_a L', \ldots, t_n\cdot_a L'),
\end{gather*}
with~\(b\) a tree in \(\Sigma_0\), \(f\) a symbol in \(\Sigma_n\), \(f(L_1,\ldots,L_n) = \{f(t_1,\ldots,t_n) \mid (t_1,\ldots,t_n) \in L_1\times\cdots\times L_n\} \) and~\(n\) trees \(t_1,\ldots,t_n\) over \(\Sigma \).
Moreover, given a homogeneous language \(L\), we set
\begin{equation*}
  L\cdot_a L' = \bigcup_{t\in L} t\cdot_a L'.
\end{equation*}
Finally, let us denote by \(L^{a,n}\) the language inductively defined by
\( L^{a,0} = \{a\} \) and \(L^{a,n} =L^{a,n-1} \cup L\cdot_a L^{n-1}\) for any integer \(n>0\),
and we set
\begin{equation*}
  L^{*_a} = \bigcup_{n\in\mathbb{N}} L^{a,n}.
\end{equation*}
\begin{example} \label{examplee_Lang}
  Let us consider the $1$-homogeneous languages $L=\{f(a,\varepsilon_1)\}$ and $L_1=\{g(\varepsilon_1), \varepsilon_1\}$  over $\Sigma=\Sigma_2\cup\Sigma_1\cup\Sigma_0$ where $f\in\Sigma_2$, $g\in \Sigma_1$ and $a\in\Sigma_0$.

  \begin{align*}
    L\circ (L_1) & =\{f(a,g(\varepsilon_1)), \; f(a,\varepsilon_1) \}, \qquad & L^{a,0}        & = \{ a \},                                                                                        & \\
    L^{a,1}      & =  L^{a,0} \cup \{ f(a,\varepsilon_1) \} \cdot_a L^{a,0}   & \qquad L^{a,2} & =L^{a,1}\cup f(a,\varepsilon_1)\cdot_a L^{a,1}                                                    & \\
                 & =   \{a\} \cup \{ f(a,\varepsilon_1) \} \cdot_a \{a\}      & \qquad         & =\{a, \; f(a,\varepsilon_1) \} \cup \{f(a,\varepsilon_1), \; f(f(a,\varepsilon_1),\varepsilon_1\} & \\
                 & =\{ a, \; f(a,\varepsilon_1)  \},                          & \qquad         & =\{ a, \; f(a,\varepsilon_1), \; f(f(a,\varepsilon_1),\varepsilon_1)) \},                           \\
  \end{align*}
  $$L^{*_a}=\{a,\; f(a,\varepsilon_1), \; f(f(a,\varepsilon_1),\varepsilon_1)), \ldots\}. $$
  \begin{align*}
    L^{0} & = \{\varepsilon_1 \},  \qquad                                      & L^{1} & = L \circ L^0                                         & \\
    L^{2} & = L \circ L^1 \qquad                                               &       & =   \{ f(a,\varepsilon_1) \} \circ \{\varepsilon_1 \} & \\
          & =\{ f(a,\varepsilon_1) \} \circ \{ f(a,\varepsilon_1) \}    \qquad &       & = \{ f(a,\varepsilon_1) \},                           & \\
          & = \{f(a,f(a,\varepsilon_1))\},                                     &       &                                                       &
  \end{align*}
  $$L^\circledast=\{ \varepsilon_1, \; f(a,\varepsilon_1), \; f(a,f(a,\varepsilon_1)), \ldots \}.$$
\end{example}
A \emph{tree automaton} over \(\Sigma \) is a \(4\)-tuple \(\mathrm{A}=(\Sigma, Q, F, \delta)\) where \(Q\) is a set of states, \(F\subseteq Q\) is the set of final states, and \(\delta\subseteq\bigcup_{k\geq 0} (Q^k\times \Sigma_k\times Q)\) is the set of transitions, which can be seen as the function from \(Q^k \times \Sigma_k\) to \(2^Q\) defined by
\begin{equation*}
  (q_1,\ldots,q_k,f,q) \in \delta \Leftrightarrow q \in \delta(q_1,\ldots,q_k,f).
\end{equation*}
It can be linearly extended as the function from \({(2^{Q})}^k \times \Sigma_k\) to \(2^Q\) defined by
\begin{equation}\label{eq:extDeltaEns}
  \delta(Q_1,\ldots,Q_k,f) = \displaystyle \bigcup_{(q_1,\ldots,q_k)\in Q_1\times\cdots Q_k} \delta(q_1,\ldots,q_k,f).
\end{equation}
Finally, we also consider the function \(\Delta \) from \(T(\Sigma)\) to \(2^Q\) defined by
\begin{equation}\label{eq def Delta}
  \Delta(f(t_1,\ldots,t_n)) = \delta(\Delta(t_1),\ldots,\Delta(t_n),f).
\end{equation}
Using these definitions, the language \(L(A)\) recognized by the tree automaton \(A\) is the language \( \{t\in T(\Sigma) \mid \Delta(t)\cap F \neq\emptyset \} \).

A tree automaton \(A=(\Sigma,Q, F,\delta)\) is \emph{deterministic} if for any symbol \(f\) in \(\Sigma_m\), for any \(m\) states \(q_1,\ldots,q_m\) in \(Q\), \(|\delta(q_1,\ldots,q_m,f)|\leq 1\).

\section{Tree Language Quotients}\label{sec:quotients form}

In this section, we recall the inductive definition of the computation of tree quotients defined in~\cite{CMOZ17}.

Let \( (t,t') \) be two trees in \( {T(\Sigma)}_k\times {T(\Sigma)}_{k'} \) such that =\( \mathrm{Ind}_\varepsilon(t)\subseteq \mathrm{Ind}_\varepsilon(t') \).
Let \( R=\mathrm{Ind}_\varepsilon(t) \), \( R'=\mathrm{Ind}_\varepsilon(t') \),
and  \( \{{(x_z)}_{1\leq z\leq k'-k}\}=R'\setminus R \).
The \emph{quotient of} \( t' \) w.r.t. \( t \) is the   \( (k'-k+1) \)-homogeneous tree language \( t^{-1}(t') \) that contains all the trees \( t'' \) satisfying the two following conditions:
\begin{align}
  %\begin{split}
  t'= t''\circ(t,{(\varepsilon_{x_z})}_{  1\leq z\leq k'-k}),
  \quad
  \mathrm{Ind}_\varepsilon(t'')=\{1,{(x_z+1)}_{1\leq z\leq k'-k}\}\label{eq def quot tree}
  %\end{split}
\end{align}
As a direct consequence,

\begin{minipage}{0.45\linewidth}
  \begin{align}
    \varepsilon_j^{-1}(\varepsilon_l) & =
    \begin{cases}
      \varepsilon_1 & \text{ if }j=l,   \\
      \emptyset     & \text{otherwise.}
    \end{cases}
    \label{eq def quot eps}
  \end{align}
\end{minipage}
\hfill
\begin{minipage}{0.45\linewidth}
  \begin{align}
    t^{-1}(t')=\{\varepsilon_1\} & \Leftrightarrow t=t'. \label{eq def quot t par t}
  \end{align}
\end{minipage}
\begin{definition}\label{def quot lang}
  The \emph{Bottom-Up quotient} \(t^{-1}(L)\) of a tree language \( L \) w.r.t.\ a tree~\( t \) is the tree language \(\bigcup_{t'\in L} t^{-1}(t')\).
\end{definition}
\begin{example}\label{Ex-quo-Tre}
  Let us consider the graded alphabet defined by $\Sigma_2=\{f\}$, $\Sigma_1=\{g\}$ and $\Sigma_0=\{a\}$. Let $t=g(a)$ and $t'= f(f(g(a),\varepsilon_1)),g(a))$   be two trees  over $\Sigma=\Sigma_0 \cup \Sigma_1 \cup \Sigma_2$. Then

  $$t^{-1}(t')=f(f(\varepsilon_1,\varepsilon_2),g(a)),\; f(f(g(a),\varepsilon_2),\varepsilon_1) \}. $$
  Notice that for any tree~$t''$ from the set~$t^{-1}(t')$, \; $t'' \circ (g(a),\varepsilon_1)=f(f(g(a),\varepsilon_1)),g(a)).$
\end{example}
As a direct consequence of Equation~\eqref{eq def quot t par t},
the membership of a tree in a tree language can be restated in terms of a quotient:
\begin{proposition}[\cite{CMOZ17}]\label{prop eq membership t eps}
  A tree \(t\) is in a language \(L\) if and only if \(\varepsilon_1\) is in \(t^{-1}(L)\).
\end{proposition}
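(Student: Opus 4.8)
The plan is to reduce the statement, \emph{via} Definition~\ref{def quot lang}, to the single-tree characterisation of quotients recorded in Equation~\eqref{eq def quot t par t}. Writing $t^{-1}(L)=\bigcup_{t'\in L}t^{-1}(t')$, the membership of $\varepsilon_1$ in $t^{-1}(L)$ amounts to the existence of some $t'\in L$ with $\varepsilon_1\in t^{-1}(t')$, so it suffices to show that, for a fixed tree $t'$, one has $\varepsilon_1\in t^{-1}(t')$ if and only if $t=t'$.

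For the forward implication, suppose $t\in L$. Then $t$ itself is one of the trees $t'$ occurring in the union defining $t^{-1}(L)$, and Equation~\eqref{eq def quot t par t} gives $t^{-1}(t)=\{\varepsilon_1\}$. Hence $\varepsilon_1\in t^{-1}(t)\subseteq t^{-1}(L)$, as desired.

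For the converse, assume $\varepsilon_1\in t^{-1}(L)$ and fix $t'\in L$ with $\varepsilon_1\in t^{-1}(t')$. Since this quotient is nonempty, the defining inclusion $\mathrm{Ind}_\varepsilon(t)\subseteq\mathrm{Ind}_\varepsilon(t')$ must hold. I would then instantiate the defining conditions~\eqref{eq def quot tree} of the quotient with the witness $t''=\varepsilon_1$. As $\mathrm{Ind}_\varepsilon(\varepsilon_1)=\{1\}$, the index condition $\mathrm{Ind}_\varepsilon(t'')=\{1,(x_z+1)_{1\le z\le k'-k}\}$ forces the family $(x_z)_z$ to be empty, that is $k'=k$; combined with $\mathrm{Ind}_\varepsilon(t)\subseteq\mathrm{Ind}_\varepsilon(t')$ this yields $R=R'$. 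The composition condition then collapses to $t'=\varepsilon_1\circ(t)=t$, and therefore $t\in L$.

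The main obstacle is the converse direction, and more precisely the index bookkeeping: one has to argue carefully that $\varepsilon_1$ can appear in $t^{-1}(t')$ only in the degenerate case where $t$ and $t'$ share exactly the same $\varepsilon$-index set, so that the substitution equation leaves no free room and degenerates to plain equality. Everything else is a direct reading of Definition~\ref{def quot lang} together with Equation~\eqref{eq def quot t par t}.
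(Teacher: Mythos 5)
Your proof is correct and takes essentially the same route as the paper, which states this proposition as a direct consequence of Equation~\eqref{eq def quot t par t} together with Definition~\ref{def quot lang} and gives no further detail. Your unfolding of that consequence is exactly what is intended: the forward direction is immediate from $t^{-1}(t)=\{\varepsilon_1\}$, and your converse correctly observes that instantiating the conditions~\eqref{eq def quot tree} with the witness $t''=\varepsilon_1$ forces the family ${(x_z)}_z$ to be empty, so the substitution equation degenerates to $t'=\varepsilon_1\circ(t)=t$.
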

Let us now make explicit the inductive computation formulae for this quotient operation.
The base cases are the three following ones.
\begin{proposition}[Proposition~$7$ of~\cite{CMOZ17}]\label{Ind-Form-Tree}
  Let \(\Sigma \) be a ranked alphabet,
  \(k\) be an integer,
  and \(\alpha \) be in \(\Sigma_k\):
  \begin{align*}
    \alpha^{-1}(\varepsilon_x)     & =\emptyset, \qquad
    \alpha^{-1}(\alpha(\varepsilon_1,\ldots,\varepsilon_n)) = \{\varepsilon_1\},                                                                 \\
    \alpha^{-1}(f(t_1,\ldots,t_n)) & =\bigcup_{1\leq j\leq n} f(\{t'_1\},\ldots,\{t'_{j-1}\},\alpha^{-1}(\{t_j\}),\{t'_{j+1}\},\ldots,\{t'_n\}),
  \end{align*}
  where \(x\) is an integer in \(\mathbb{N}\), \(f\) is a symbol in \(\Sigma_n\), \(t_1,\ldots,t_n\) are \(n\) trees in \(T_\Sigma \) distinct from \((\varepsilon_1,\ldots,\varepsilon_n)\) and for all integer~\(1\leq z\leq n\), \(t'_z\) is the tree \(\mathrm{Inc}_\varepsilon(1,t_{z})\).
\end{proposition}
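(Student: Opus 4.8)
The plan is to read the superscript symbol $\alpha\in\Sigma_k$ as a shorthand for the $k$-ary tree $\alpha(\varepsilon_1,\ldots,\varepsilon_k)$, so that $\alpha^{-1}(t')$ is the quotient of Definition~\ref{def quot lang} taken with $t=\alpha(\varepsilon_1,\ldots,\varepsilon_k)$ and $R=\mathrm{Ind}_\varepsilon(t)=\{1,\ldots,k\}$. Each of the three formulas is then obtained by unfolding the defining conditions~\eqref{eq def quot tree}: a tree $t''$ lies in $\alpha^{-1}(t')$ exactly when $t'=t''\circ(\alpha(\varepsilon_1,\ldots,\varepsilon_k),(\varepsilon_{x_z})_z)$ and $\mathrm{Ind}_\varepsilon(t'')=\{1,(x_z+1)_z\}$, where $\{(x_z)_z\}=\mathrm{Ind}_\varepsilon(t')\setminus R$. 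The two base cases are nearly immediate, and the inductive case I would prove by double inclusion.

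For the second formula I would simply invoke~\eqref{eq def quot t par t}: the argument equals $t$ itself, so its quotient is $\{\varepsilon_1\}$. For the first formula I would argue that no admissible context exists. When $R\not\subseteq\{x\}=\mathrm{Ind}_\varepsilon(\varepsilon_x)$ the quotient is empty by definition; otherwise any candidate $t''$ would have to satisfy $t''\circ(\alpha(\varepsilon_1,\ldots,\varepsilon_k),\ldots)=\varepsilon_x$, which is impossible, since the left-hand side is either $\alpha$-rooted (when $t''$ is the bare hole $\varepsilon_1$) or carries the root symbol of $t''$, neither of which equals the placeholder $\varepsilon_x$. Hence $\alpha^{-1}(\varepsilon_x)=\emptyset$.

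For the inductive formula, write $t'=f(t_1,\ldots,t_n)$ and $R_i=\mathrm{Ind}_\varepsilon(t_i)$. Given $t''\in\alpha^{-1}(t')$, I would first rule out $t''=\varepsilon_1$: its substitution would force $f(t_1,\ldots,t_n)=\alpha(\varepsilon_1,\ldots,\varepsilon_n)$, contradicting the hypothesis $(t_1,\ldots,t_n)\neq(\varepsilon_1,\ldots,\varepsilon_n)$. Thus $t''$ has a genuine root symbol which, being preserved under substitution, must match the root $f$ of $t'$; so $t''=f(s_1,\ldots,s_n)$ with $s_i\circ(\alpha(\varepsilon_1,\ldots,\varepsilon_k),(\varepsilon_{x_z})_z)=t_i$ for each $i$. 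Since the hole $\varepsilon_1$ occurs exactly once in $t''$, it sits in a single child $s_j$; for $i\neq j$ the child $s_i$ carries no hole, so its placeholders are exactly the shifted indices, forcing $s_i=\mathrm{Inc}_\varepsilon(1,t_i)=t'_i$, while $s_j$ meets precisely the defining conditions of $\alpha^{-1}(t_j)$. The reverse inclusion reassembles any $f(t'_1,\ldots,s_j,\ldots,t'_n)$ with $s_j\in\alpha^{-1}(t_j)$ and checks that substitution returns $f(t_1,\ldots,t_n)$ with the prescribed $\varepsilon$-index set.

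The step I expect to be most delicate is the index bookkeeping that makes the two readings of the shift agree. One must show that the global relabelling $\varepsilon_{x_z}\mapsto\varepsilon_{x_z+1}$ of the quotient definition splits, across the children, into the recursive relabelling performed inside $\alpha^{-1}(t_j)$ and the uniform increment $\mathrm{Inc}_\varepsilon(1,\cdot)$ applied to the siblings. Concretely, because the matched $\alpha(\varepsilon_1,\ldots,\varepsilon_k)$ is a single subtree, all of $R=\{1,\ldots,k\}$ lies inside the one child $t_j$ (so $R\subseteq R_j$ and the $R_i$ for $i\neq j$ are disjoint from $R$); verifying that $\mathrm{Ind}_\varepsilon(t'')=\{1\}\cup((R_j\setminus R)+1)\cup\bigcup_{i\neq j}(R_i+1)$ coincides with the prescribed set $\{1,(x_z+1)_z\}$ is where the careful accounting is required. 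When no single $t_j$ contains all of $R$, both sides are empty, which is exactly what the union over $j$ on the right-hand side produces.
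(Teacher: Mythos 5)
Your proof is correct, and it is the intended argument: reading \(\alpha^{-1}\) as the quotient by the tree \(\alpha(\varepsilon_1,\ldots,\varepsilon_k)\) and unfolding Equation~\eqref{eq def quot tree} — using that a non-trivial context \(t''\) keeps its root symbol under composition, that \(\varepsilon_1\) occurs in exactly one child \(s_j\) of \(t''\) (forcing the siblings to be the \(\varepsilon\)-incremented trees \(t'_i\) and \(s_j\) to satisfy precisely the defining conditions of \(\alpha^{-1}(t_j)\)), together with Equation~\eqref{eq def quot t par t} for the second formula. Note that the paper only recalls this statement as Proposition~7 of~\cite{CMOZ17} without reproducing its proof, so there is no in-paper argument to compare against; your index bookkeeping, including the observation that \(R'\setminus R\) splits as \((R_j\setminus R)\cup\bigcup_{i\neq j}R_i\) and that both sides are empty when no single \(t_j\) contains all of \(\{1,\ldots,k\}\), checks out.
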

By Equation~(\ref{eq def quot tree}) and Definition~\ref{def quot lang}, quotienting by an indexed $\varepsilon$ is reindexing  all the indexed $\varepsilon$ in the language.
\begin{proposition}[Proposition~$9$ of~\cite{CMOZ17}]\label{Ind-Form-Lang}
  Let \(L\) be homogeneous with \(\mathrm{Ind}_\varepsilon(L)=\{j_1,\ldots,j_k\} \) and \(j\) be an integer:

  \begin{equation}
    \resizebox{.9\linewidth}{!}{
      \(
      \varepsilon_j^{-1}(L)=
      \begin{cases}
        L\circ (
        \varepsilon_{j_1+1},\ldots,\varepsilon_{j_{z-1}+1},
        \varepsilon_1,
        \varepsilon_{j_{z+1}+1},\ldots,\varepsilon_{j_{k}+1})
                  & \text{ if } j=j_z\in\mathrm{Ind}_\varepsilon(L), \\
        \emptyset & \text{otherwise.}
      \end{cases}
      \)
    }
    \label{eq quot eps}
  \end{equation}
\end{proposition}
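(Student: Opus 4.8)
The plan is to reduce the statement to the single-tree quotient and then unfold the definition recorded in Equation~\eqref{eq def quot tree}. By Definition~\ref{def quot lang} we have $\varepsilon_j^{-1}(L)=\bigcup_{t'\in L}\varepsilon_j^{-1}(t')$, and since $L$ is homogeneous every $t'\in L$ shares the same index set $\mathrm{Ind}_\varepsilon(t')=\{j_1,\ldots,j_k\}$. The tree quotient of Equation~\eqref{eq def quot tree} is posed only under the hypothesis $\mathrm{Ind}_\varepsilon(t)\subseteq\mathrm{Ind}_\varepsilon(t')$ and is empty (no $t''$ can satisfy its conditions) otherwise; with $t=\varepsilon_j$ this inclusion reads simply $j\in\{j_1,\ldots,j_k\}$. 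This at once settles the second branch: when $j\notin\mathrm{Ind}_\varepsilon(L)$, each $\varepsilon_j^{-1}(t')$ is empty and hence so is their union.

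For the principal case $j=j_z$ I would fix a single $t'\in L$ and instantiate the conditions of Equation~\eqref{eq def quot tree} with $t=\varepsilon_{j_z}$. Here $R=\{j_z\}$, $R'=\{j_1,\ldots,j_k\}$, so $R'\setminus R=\{j_w\mid w\neq z\}$ has $k-1$ elements; the quotient is therefore $k$-homogeneous and any member $t''$ must satisfy $\mathrm{Ind}_\varepsilon(t'')=\{1\}\cup\{j_w+1\mid w\neq z\}$ together with $t'=t''\circ(\varepsilon_{j_z},(\varepsilon_{j_w})_{w\neq z})$. The key point --- and the place where the argument genuinely uses that the divisor is a bare indexed leaf rather than a full subtree --- is that substituting indexed empty trees into the holes of $t''$ performs nothing but a relabelling of those holes, so the defining equation is invertible: $t''$ is uniquely recovered from $t'$ by renaming $\varepsilon_{j_z}$ to $\varepsilon_1$ and each $\varepsilon_{j_w}$ (for $w\neq z$) to $\varepsilon_{j_w+1}$. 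I would then check that this $t''$ indeed carries the prescribed index set and reproduces $t'$ under the substitution, so that $\varepsilon_{j_z}^{-1}(t')$ is exactly the singleton containing this relabelled tree.

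It remains to recognise this relabelling as an instance of the composition operator and to lift the equality back to the language. Reading the ordered tuple against $\mathrm{Ind}_\varepsilon(t')=\{j_1<\cdots<j_k\}$, the map sending $\varepsilon_{j_z}\mapsto\varepsilon_1$ and $\varepsilon_{j_w}\mapsto\varepsilon_{j_w+1}$ for $w\neq z$ is precisely $t'\mapsto t'\circ(\varepsilon_{j_1+1},\ldots,\varepsilon_{j_{z-1}+1},\varepsilon_1,\varepsilon_{j_{z+1}+1},\ldots,\varepsilon_{j_k+1})$. Taking the union over $t'\in L$ and applying the definition of $\circ$ on languages from Equation~\eqref{def compo lang} collapses the union of singletons into $L\circ(\varepsilon_{j_1+1},\ldots,\varepsilon_1,\ldots,\varepsilon_{j_k+1})$, which is the claimed value. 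I expect the main obstacle to be the careful bookkeeping of the previous paragraph: verifying that the ordering of $\mathrm{Ind}_\varepsilon(t'')$ --- in particular that $1$ is its least element, which holds because all $\varepsilon$-indices are positive integers so that every $j_w+1\geq 2$ --- matches the order in which the substitution tuple is read, so that the implicit $t''$ and the explicit composition agree hole by hole.
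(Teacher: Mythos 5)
Your proposal is correct, and it takes the same route the paper itself indicates: the paper states this result as a recalled proposition from~\cite{CMOZ17}, justified by the single remark preceding it that, by Equation~\eqref{eq def quot tree} and Definition~\ref{def quot lang}, quotienting by an indexed \(\varepsilon\) amounts to reindexing all the indexed \(\varepsilon\) in the language. Your argument is a faithful, careful unfolding of exactly that observation --- reducing to single-tree quotients, noting that substituting indexed leaves into holes is an invertible relabelling (whence the quotient of each tree is a singleton), and identifying that relabelling with the stated composition --- including the needed bookkeeping that \(1\) is the least index of the quotient trees.
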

\begin{example}\label{exampl-tree-der}
  Let us consider a tree $t=f(\varepsilon_2,f(a,a))$  with $f\in\Sigma_2$ and $a\in\Sigma_0$. Let us calculate $t^{-1}(t)$. Then
  \begin{align*}
    a^{-1}(t)                       & = \{f(\varepsilon_3,f(\varepsilon_1,a)), \; f(\varepsilon_3,f(a,\varepsilon_1))  \}                                                          \\
    a^{-1}(a^{-1}(t))               & =\{f(\varepsilon_4,f(\varepsilon_2,\varepsilon_1)), \; f(\varepsilon_4,f(\varepsilon_1,\varepsilon_2)) \circ (\varepsilon_1,\varepsilon_2)\} \\
    f(a,a)^{-1}(t)                  & =\{f(\varepsilon_5,\varepsilon_1)\circ (\varepsilon_1,\varepsilon_2) \}                                                                      \\
                                    & = \{f(\varepsilon_2,\varepsilon_1)\}                                                                                                         \\
    f(\varepsilon_2,f(a,a))^{-1}(t) & = \{\varepsilon_1 \}.
  \end{align*}
\end{example}
As a direct consequence of Definition~\ref{def quot lang}, the Bottom-Up quotient for the union of languages can be computed as follows:
\begin{lemma}[Lemma~$13$ of~\cite{CMOZ17}]
  \noindent Let \(t\) be a tree in \(T(\Sigma)\), \(L_1\) and \(L_2\) be two languages over \(\Sigma \).
  Then:
  \begin{equation}
    t^{-1}(L_1\cup L_2)=t^{-1}(L_1)\cup t^{-1}(L_2).
    \label{eq:formuleDerUnion}
  \end{equation}
\end{lemma}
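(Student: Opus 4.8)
The plan is to unfold Definition~\ref{def quot lang} and reduce the claim to the purely set-theoretic distributivity of an indexed union over a union of index sets. By definition, the Bottom-Up quotient of a language \(L\) with respect to \(t\) is \(t^{-1}(L) = \bigcup_{t'\in L} t^{-1}(t')\), where the single-tree quotient \(t^{-1}(t')\) is the homogeneous language introduced in Equation~\eqref{eq def quot tree}. Applying this definition with \(L = L_1\cup L_2\) immediately gives \(t^{-1}(L_1\cup L_2) = \bigcup_{t'\in L_1\cup L_2} t^{-1}(t')\).

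Next I would invoke the elementary fact that, for any family \({(S_{t'})}_{t'}\) of sets indexed by trees, \(\bigcup_{t'\in L_1\cup L_2} S_{t'} = \left(\bigcup_{t'\in L_1} S_{t'}\right) \cup \left(\bigcup_{t'\in L_2} S_{t'}\right)\), specialised to \(S_{t'} = t^{-1}(t')\). Folding each of the two resulting unions back through Definition~\ref{def quot lang} turns them into \(t^{-1}(L_1)\) and \(t^{-1}(L_2)\) respectively, which yields exactly the right-hand side \(t^{-1}(L_1)\cup t^{-1}(L_2)\).

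I expect no genuine obstacle here: the statement is a direct consequence of the definition of the quotient on languages as a union of single-tree quotients, and the only ingredient beyond unfolding is the commutation of two arbitrary unions. The single-tree quotient \(t^{-1}(t')\) is never expanded and is treated as an opaque set throughout, so none of the inductive formulae of Propositions~\ref{Ind-Form-Tree} or~\ref{Ind-Form-Lang} are needed. The only mild subtlety worth a remark is that \(L_1\) and \(L_2\) need not be disjoint; but since set union is idempotent, a tree \(t'\) lying in both contributes the very same set \(t^{-1}(t')\) to each side, so any overlap causes no discrepancy.
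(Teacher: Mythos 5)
Your proof is correct and matches the paper's own treatment: the paper presents this lemma as ``a direct consequence of Definition~\ref{def quot lang}'', which is precisely your argument of unfolding \(t^{-1}(L_1\cup L_2)=\bigcup_{t'\in L_1\cup L_2}t^{-1}(t')\), splitting the indexed union over the two sets, and folding back. Nothing further is needed, and your remark about non-disjointness being harmless is a sensible (if optional) observation.
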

\begin{corollary}[Corollary~$14$ of~\cite{CMOZ17}]
  Let \( t=f(t_1,\ldots,t_k) \) be  an~\( l \)-ary tree such that \( f \) is in \( \Sigma_k \) and \( (t_1,\ldots,t_k) \) is a \( k \)-tuple of trees in \( T(\Sigma) \) different from \( (\varepsilon_1,\ldots,\varepsilon_k) \).
  Let \( L \) be a \( n \)-homogeneous tree language over \( \Sigma \) with \( \mathrm{Ind}_\varepsilon(L)=\{x_1,\ldots,x_n\} \).
  Let \( \{y_1,\ldots,y_{n-l}\}=\mathrm{Ind}_\varepsilon(L)\setminus \mathrm{Ind}_\varepsilon(t) \)
  and \( \forall 1\leq j\leq  k \), \( t'_j=\mathrm{Inc}_\varepsilon(k-j,t_j) \). Then:
  \begin{equation}\label{eq quot lang by tree}
    t^{-1}(L)=(f^{-1}({t'_1}^{-1}(\cdots ({t'_k}^{-1}(L))\cdots))\circ(\varepsilon_1,{(\varepsilon_{y_z+1})}_{1\leq z\leq n-l}).
  \end{equation}
\end{corollary}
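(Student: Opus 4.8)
The plan is to reduce the claim to singleton languages and then to prove the formula by tracking how the iterated quotients move the \(\varepsilon\)-indices. First I would invoke Equation~\eqref{eq:formuleDerUnion} together with the observation that the quotients occurring on the right-hand side and the composition \(\circ\) of Equation~\eqref{def compo lang} all distribute over unions of the language they act on; since \(L\) is \(n\)-homogeneous, the set \(\{y_1,\ldots,y_{n-l}\}=\mathrm{Ind}_\varepsilon(L)\setminus\mathrm{Ind}_\varepsilon(t)\) does not depend on the chosen \(t'\in L\). Hence both sides of the claimed identity distribute over \(L=\bigcup_{t'\in L}\{t'\}\), and it suffices to treat \(L=\{t'\}\) for a single tree \(t'\) with \(\mathrm{Ind}_\varepsilon(t)\subseteq\mathrm{Ind}_\varepsilon(t')\); when this inclusion fails, both sides are empty.

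The core idea is the context factorisation \(t=f(\varepsilon_1,\ldots,\varepsilon_k)\circ(t_1,\ldots,t_k)\): factoring \(t\) out of \(t'\) can be realised by first factoring out the inner subtrees \(t_1,\ldots,t_k\), which creates one hole per matched child, and then factoring out the outer shape \(f(\varepsilon_1,\ldots,\varepsilon_k)\), which by Proposition~\ref{Ind-Form-Tree} is exactly the quotient \(f^{-1}\). I would therefore establish a contravariant ``chain rule'': an occurrence of the pattern \(f(t_1,\ldots,t_k)\) at a position of \(t'\) amounts to \(k\) matched children that are siblings under a common \(f\); peeling them off in decreasing index order (\(t_k\) first, \(t_1\) last) via \({t'_k}^{-1},\ldots,{t'_1}^{-1}\) turns the matched children into holes and leaves \(f\) applied to those holes, after which \(f^{-1}\) fires precisely on the genuine configurations. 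The shift \(t'_j=\mathrm{Inc}_\varepsilon(k-j,t_j)\) is forced by this peeling order: when \(t_j\) is processed, the \(k-j\) children already removed have each pushed every surviving hole up by one, by the clause \(\mathrm{Ind}_\varepsilon(t'')=\{1,(x_z+1)_z\}\) of Equation~\eqref{eq def quot tree}, so the holes of \(t_j\) must be pre-shifted by \(k-j\) to realign with \(t'\).

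The second half is the index bookkeeping, which I would verify by following the surviving holes through the \(k+1\) successive quotients. Processing \(t_k,t_{k-1},\ldots,t_1\) places the hole created for child \(t_j\) at index \(j\): it is born as \(\varepsilon_1\) and is pushed up once by each of the \(j-1\) later steps, so the remaining occurrence is exactly \(f(\varepsilon_1,\ldots,\varepsilon_k)\) in order and \(f^{-1}\) returns \(\varepsilon_1\) by Proposition~\ref{Ind-Form-Tree}; meanwhile each \(y_z\) is pushed up once per quotient, reaching \(y_z+k\) after the children and \(y_z+k+1\) after \(f^{-1}\). The final composition \(\circ\,(\varepsilon_1,(\varepsilon_{y_z+1})_{1\le z\le n-l})\) then substitutes \(\varepsilon_1\) for the distinguished hole and \(\varepsilon_{y_z+1}\) for the \(z\)-th remaining (ordered) hole \(\varepsilon_{y_z+k+1}\), correcting the over-shift \(y_z+k+1\mapsto y_z+1\) and producing a tree \(t''\) with \(\mathrm{Ind}_\varepsilon(t'')=\{1,(y_z+1)_z\}\), which is exactly the index set demanded by Equation~\eqref{eq def quot tree}. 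Reading this chain of equalities forwards shows that membership in the right-hand side forces \(t'=t''\circ(t,(\varepsilon_{y_z})_z)\); reading it backwards gives the converse inclusion.

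I expect the main obstacle to be the chain rule itself, namely proving rigorously that \({t'_k}^{-1},\ldots,{t'_1}^{-1}\) followed by \(f^{-1}\) realise \emph{exactly} the occurrences of \(f(t_1,\ldots,t_k)\) and nothing more. The delicate point is that each intermediate quotient \({t'_j}^{-1}\) also matches copies of \(t_j\) located elsewhere in \(t'\), not only the one below the relevant \(f\); one must argue, by induction on the structure of \(t'\) using the recursive clause of Proposition~\ref{Ind-Form-Tree}, that such spurious partial matches never assemble into a subtree of the form \(f(\varepsilon_1,\ldots,\varepsilon_k)\) with the children in the correct order, and are therefore discarded by the final \(f^{-1}\), while every genuine match survives with the indices computed above.
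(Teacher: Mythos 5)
Your proposal is correct, but note that the paper contains no proof of its own to compare against: the statement is recalled verbatim as Corollary~14 of~\cite{CMOZ17}, presented (there and here) as a consequence of the definition of language quotients and of the union formula~\eqref{eq:formuleDerUnion}, which is exactly the reduction to singleton languages you perform first; your hole-tracking argument forcing the shifts \(t'_j=\mathrm{Inc}_\varepsilon(k-j,t_j)\), the hole for child \(t_j\) landing at index \(j\), and the final recomposition correcting \(y_z+k+1\) back to \(y_z+1\) is precisely the substance of the cited proof. The one obstacle you flag (spurious matches of the \(t_j\) elsewhere in \(t'\)) resolves more cleanly than you anticipate and needs no structural induction: after the \(k\) child quotients every surviving original index of \(t'\) has been pushed strictly above \(k\), so a subtree \(f(\varepsilon_1,\ldots,\varepsilon_k)\) can only be assembled from the \(k\) created holes, and since hole \(j\) sits exactly where a copy of \(t_j\) was removed, such a pattern exists precisely at the genuine occurrences of \(t=f(t_1,\ldots,t_k)\) — a direct consequence of the defining Equation~\eqref{eq def quot tree}, while trees in the intermediate sets carrying only partial matches are sent to \(\emptyset\) by \(f^{-1}\).
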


The Bottom-Up quotient for the $b$-product of languages can be computed as follows:
\begin{proposition}[Proposition~$17$ of~\cite{CMOZ17}]
  Let \( \Sigma \) be an alphabet.
  Let \( L_1 \) be a \( k \)-homogeneous language, \( L_2 \) be a \( 0 \)-homogeneous language,
  \( \alpha \) be a symbol in \( \Sigma \) and \( b \) be a symbol in \( \Sigma_0 \).
  Then:
  \begin{align*}
    \alpha^{-1}(L_1\cdot_b L_2) & =
    \begin{cases}
      (b^{-1}(L_1)\cdot_b L_2) \circ_1 b^{-1}(L_2)                                       & \text{ if }\alpha=b,                         \\
      \alpha^{-1}(L_1)\cdot_b L_2 \cup (b^{-1}(L_1)\cdot_b L_2) \circ_1 \alpha^{-1}(L_2) & \text{ if }\alpha\in\Sigma_0\setminus \{b\}, \\
      \alpha^{-1}(L_1)\cdot_b L_2                                                        & \text{otherwise,}                            \\
    \end{cases}
  \end{align*}
  where \(\circ_1\) is the partial composition defined by
  \(
  L \circ_1 L' = L\circ (L', {(\varepsilon_l)}_{j_2 \leq l \leq j_k})
  \)
  with \(\mathrm{Ind}_\varepsilon(L)= \{j_1,\ldots,j_k\} \).
\end{proposition}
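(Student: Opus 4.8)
The plan is to reduce the statement to a single tree and then argue by structural induction on that tree, following the inductive definition of $\cdot_b$. First I would reduce to one tree: since $L_1\cdot_b L_2=\bigcup_{t\in L_1} t\cdot_b L_2$ and the quotient distributes over unions \eqref{eq:formuleDerUnion}, while both $\cdot_b$ and the partial composition $\circ_1$ distribute over unions in their first argument, it suffices to prove each of the three equalities with $L_1$ replaced by a single $t\in L_1$. The $k$-homogeneity of $L_1$ is what makes this legitimate: all such $t$ share the same $\varepsilon$-index set, so $b^{-1}(L_1)\cdot_b L_2$ is homogeneous and the $\mathrm{Ind}_\varepsilon$ that defines $\circ_1$ is the same for every summand.

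Two elementary facts would be recorded once. Because $L_2$ is $0$-homogeneous (its trees carry no $\varepsilon$-leaves), substituting it for $b$ neither creates nor destroys holes, so $\mathrm{Ind}_\varepsilon(t\cdot_b L_2)=\mathrm{Ind}_\varepsilon(t)$; moreover $\mathrm{Inc}_\varepsilon(1,\cdot)$ commutes with $\cdot_b L_2$, and $\cdot_b L_2$ distributes through every $f(\ldots)$-context (both immediate from the definition of $\cdot_b$). For the same reason $\alpha^{-1}(L_2)=\emptyset$ whenever $\alpha\notin\Sigma_0$: by Proposition~\ref{Ind-Form-Tree} a nonempty quotient by a symbol of arity $\geq 1$ can only come from a subtree of the exact shape $\alpha(\varepsilon_1,\dots,\varepsilon_n)$, which cannot occur in a hole-free tree. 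This is exactly why the third case carries no $\circ_1$-term.

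Then comes the induction. The base cases $t=\varepsilon_j$, $t=b$ and $t=c\in\Sigma_0\setminus\{b\}$ are verified directly from Proposition~\ref{Ind-Form-Tree} and the definition of $\cdot_b$, as is the pattern $t=f(\varepsilon_1,\dots,\varepsilon_n)$. For $t=f(t_1,\dots,t_n)$ with $n\geq 1$ and $t\neq f(\varepsilon_1,\dots,\varepsilon_n)$ one writes $t\cdot_b L_2=f(t_1\cdot_b L_2,\dots,t_n\cdot_b L_2)$; since no $f(s_1,\dots,s_n)$ with $s_i\in t_i\cdot_b L_2$ can equal $\alpha(\varepsilon_1,\dots,\varepsilon_n)$, the recursive clause of Proposition~\ref{Ind-Form-Tree} (lifted to languages by union-distributivity) expresses $\alpha^{-1}(t\cdot_b L_2)$ as a union over $j$ of $f$-contexts containing $\alpha^{-1}(t_j\cdot_b L_2)$ in slot $j$ and $\mathrm{Inc}_\varepsilon(1,t_i\cdot_b L_2)$ elsewhere. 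Applying the induction hypothesis to each $\alpha^{-1}(t_j\cdot_b L_2)$ (and, when $\alpha\in\Sigma_0$, to $b^{-1}(t_j\cdot_b L_2)$) and using the commutations above, every summand splits into a ``from $t$'' part and a ``from $L_2$'' part, which should reassemble into $\alpha^{-1}(t)\cdot_b L_2$ and $(b^{-1}(t)\cdot_b L_2)\circ_1\alpha^{-1}(L_2)$ respectively (with $\alpha$ replaced by $b$ when $\alpha=b$).

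The delicate point, and the step I expect to be the main obstacle, is the reassembly of the ``from $L_2$'' parts into one top-level $\circ_1$. Here the $\varepsilon$-indices must be tracked precisely: the quotient $b^{-1}(t_j)$ introduces the marker hole $\varepsilon_1$ and shifts each pre-existing hole to $x+1$ (the index prescription of \eqref{eq def quot tree}), so $\varepsilon_1$ is the unique smallest hole both locally, inside slot $j$, and globally, in the whole $f$-context. Since $\alpha^{-1}(L_2)$ and $b^{-1}(L_2)$ are $1$-homogeneous with hole set $\{1\}$, composing them into this marker affects only slot $j$ and leaves the shifted holes $\geq 2$ untouched, so the local $(b^{-1}(t_j)\cdot_b L_2)\circ_1(\cdot)$ coincides with the global $\circ_1$ restricted to slot $j$. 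Establishing this commutation of partial composition with context formation, together with checking that the hole sets on both sides agree with $\mathrm{Ind}_\varepsilon(b^{-1}(t\cdot_b L_2))=\{1\}\cup\{x+1\mid x\in\mathrm{Ind}_\varepsilon(t)\}$, is the crux; once it is in place, the three cases close uniformly.
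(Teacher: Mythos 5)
The paper never proves this statement: it is imported wholesale as Proposition~17 of~\cite{CMOZ17}, so there is no internal proof to compare yours against, and your argument has to stand on its own. Judged that way, it is sound. The reduction to a single tree $t\in L_1$ is legitimate for exactly the reason you give ($k$-homogeneity makes the index set governing $\circ_1$ uniform over all summands, so $\circ_1$ distributes over the union); the observation that $\alpha^{-1}(L_2)=\emptyset$ for $\alpha$ of arity $\geq 1$ correctly accounts for the missing $\circ_1$-term in the third case; and the crux you isolate, namely $f(\ldots,(b^{-1}(t_j)\cdot_b L_2)\circ_1 C,\ldots)=f(\ldots,b^{-1}(t_j)\cdot_b L_2,\ldots)\circ_1 C$, holds for precisely the reason you state: after quotienting, $\varepsilon_1$ is the unique minimal hole of the whole $f$-context and occurs only in slot $j$, all other holes having been shifted to indices $\geq 2$, so plugging hole $1$ globally acts only inside slot $j$. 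Two points should be made explicit in a full write-up. First, the induction invariant must comprise all three cases of the statement simultaneously, since the mixed case $\alpha\in\Sigma_0\setminus\{b\}$ applies the case $\alpha=b$ to the subtrees $t_j$; your parenthetical does this implicitly, but it is the actual shape of the induction. Second, your claim that no tree $f(s_1,\ldots,s_n)$ with $s_i\in t_i\cdot_b L_2$ can equal the pattern $\alpha(\varepsilon_1,\ldots,\varepsilon_n)$ rests on the fact that a $0$-homogeneous $L_2$ contains no tree of the form $\varepsilon_i$, so $s_i=\varepsilon_i$ forces $t_i=\varepsilon_i$; this is what licenses applying the recursive clause of Proposition~\ref{Ind-Form-Tree} to every tree of $t\cdot_b L_2$ in the inductive step, and it is where the hypothesis on $L_2$ does real work, so it deserves a sentence rather than being folded into the remark that substitution preserves holes.
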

The Bottom-Up quotient for the composition of languages can be computed as follows:
\begin{proposition}[Proposition~$20$ of~\cite{CMOZ17}]
  Let \( \Sigma \) be an alphabet. Let \( L \) be a \( k \)-homogeneous language with $\mathrm{Ind}_\varepsilon(L)=\{j_1,\ldots,j_k\}$,  $L_1,\ldots,L_k$ be  $k$ tree languages and \( \alpha \) be in $\Sigma_n$. Then:
  \noindent
  %\begin{minipage}{\linewidth}
  \begin{align}
    \begin{split}
      \alpha^{-1}(L\circ(L_1,\ldots,L_k))
      &=
      \displaystyle \bigcup_{1\leq j\leq k} L\circ({(\mathrm{Inc}_\varepsilon(1,L_l))}_{1\leq l\leq j},\alpha^{-1}(L_j),{(\mathrm{Inc}_\varepsilon(1,L_l))}_{j+1\leq l\leq k})\\
      & \qquad \cup
      \begin{cases}
        {\alpha({(\varepsilon_{j_{p_l}})}_{1\leq l\leq n})}^{-1}(L) \circ(\varepsilon_1,{(\mathrm{Inc}_\varepsilon(1,L_l))}_{1\leq l\leq k\mid \forall z, l\neq p_z})) \\
        \quad \text{ if } \forall{} 1\leq{} l\leq{} n, \exists{} 1\leq{} p_l\leq{} k, \varepsilon_l \in{} L_{p_l}                                                      \\
        \emptyset{} \quad \text{ otherwise.}
      \end{cases}
    \end{split}
    \label{eq:derivFormCirc}
  \end{align}
  % \end{minipage}
\end{proposition}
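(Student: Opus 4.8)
The plan is to establish the identity by a direct analysis of the possible positions of the pattern $\alpha(\varepsilon_1,\ldots,\varepsilon_n)$ inside a composed tree, exploiting the distributivity of the quotient over unions. First I would reduce the statement to a single composed tree. Since $\alpha^{-1}$ commutes with union by~\eqref{eq:formuleDerUnion}, and $\circ$ distributes over union in each argument by~\eqref{def compo lang}, it is enough to fix $t\in L$ with $\mathrm{Ind}_\varepsilon(t)=\{j_1,\ldots,j_k\}$ and trees $s_i\in L_i$, to put $u=t\circ(s_1,\ldots,s_k)$, and to determine $\alpha^{-1}(u)$; re-taking the unions over all choices of $t$ and the $s_i$ will then recover both terms of the right-hand side, the language-level condition $\varepsilon_l\in L_{p_l}$ being exactly the union-level form of the tree-level choice $s_{p_l}=\varepsilon_l$.

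By the definition of the quotient~\eqref{eq def quot tree}, a tree $t''$ lies in $\alpha^{-1}(u)$ precisely when $u$ contains an occurrence of the subtree $\alpha(\varepsilon_1,\ldots,\varepsilon_n)$ — an $\alpha$-node whose $n$ children are the holes $\varepsilon_1,\ldots,\varepsilon_n$ — and $t''$ is obtained by contracting that occurrence to a fresh hole $\varepsilon_1$ while shifting every surviving hole index up by one. The combinatorial core is that every node of $u=t\circ(s_1,\ldots,s_k)$ is traceable either to a non-hole node of $t$ or to a node of a unique $s_i$, so each such occurrence falls into exactly one of two cases. In Case~A the $\alpha$-node comes from some $s_j$; contracting the occurrence is then performing $\alpha^{-1}$ inside $s_j$ alone, giving $t''=t\circ(\ldots,s_j'',\ldots)$ with $s_j''\in\alpha^{-1}(s_j)$ and every other argument shifted by $\mathrm{Inc}_\varepsilon(1,\cdot)$, which is the generic summand of the union $\bigcup_{1\le j\le k}$. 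In Case~B the $\alpha$-node comes from $t$, its children being holes $\varepsilon_{j_{p_1}},\ldots,\varepsilon_{j_{p_n}}$ of $t$ filled by bare holes $s_{p_l}=\varepsilon_l$; contracting it amounts to quotienting $t$ by $\alpha(\varepsilon_{j_{p_1}},\ldots,\varepsilon_{j_{p_n}})$ and then re-plugging $\varepsilon_1$ together with the surviving subtrees, which is exactly the case-distinction term.

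For each case I would argue the two inclusions: given a witnessing occurrence I read off its case and exhibit the corresponding decomposition of $t''$, and conversely, starting from an element of a right-hand term I reconstruct an occurrence in $u$ and check that~\eqref{eq def quot tree} is satisfied. Concretely this single-tree computation can be organised as an induction on the structure of $t$ driven by Proposition~\ref{Ind-Form-Tree}: its base case $\alpha^{-1}(\alpha(\varepsilon_1,\ldots,\varepsilon_n))=\{\varepsilon_1\}$ produces Case~B, while its recursive clause, distributing $\alpha^{-1}$ into the children, produces Case~A, the induction hypothesis applying to each child of $t$ together with the substitutions it receives. Exhaustiveness and disjointness of the two cases are then automatic, since the two clauses of Proposition~\ref{Ind-Form-Tree} are mutually exclusive.

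I expect the main obstacle to be the index bookkeeping rather than the case analysis. The bottom-up quotient simultaneously inserts the hole $\varepsilon_1$ and shifts all other holes up by one, whereas $\circ$ and $\mathrm{Inc}_\varepsilon$ are defined relative to the ordered index set $\mathrm{Ind}_\varepsilon$; reconciling these conventions is delicate. In Case~A I must check that the shift performed by $\alpha^{-1}$ inside $s_j$ coincides, after composition with $t$, with applying $\mathrm{Inc}_\varepsilon(1,\cdot)$ to the remaining arguments. In Case~B I must check that restricting to the surviving indices $\{\,l : \forall z,\ l\neq p_z\,\}$ and prefixing $\varepsilon_1$ reproduces exactly the index set $\{1,(x_z+1)\}$ demanded by~\eqref{eq def quot tree}, and that $\alpha(\varepsilon_{j_{p_1}},\ldots,\varepsilon_{j_{p_n}})^{-1}(L)$ is the correct pattern for the $\alpha$-node of $t$. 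Once these reindexing identities are verified, assembling the two cases and re-taking the unions over $t$ and the $s_i$ yields the announced formula.
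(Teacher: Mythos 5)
Your proposal is correct and takes essentially the same route as the paper: the paper justifies this proposition (recalled from~\cite{CMOZ17}) by exactly your two-case analysis, namely that the removed occurrence of \(\alpha\) lies either inside one of the grafted lower trees (yielding the indexed union over \(j\)) or at a node of the upper tree whose children are holes filled by \(\varepsilon_1,\ldots,\varepsilon_n\) (yielding the conditional term), followed by lifting to languages via the distributivity of the quotient and of \(\circ\) over unions. Your extra suggestion to formalize the single-tree identity by induction on the upper tree via Proposition~\ref{Ind-Form-Tree} only adds detail to, and does not diverge from, the paper's argument.
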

Let us explain the above formula,  in order to explain  the quotient of the composition of the set of  $k$-ary trees  $L$ with $k$-homogeneous languages $L_1,\ldots,L_k$ w.r.t. a symbol $\alpha$ we first  explain  $\alpha^{-1}(t\circ (t_1,\ldots,t_k))$.

The composition of a $k$-ary tree $t$ such that  $Ind_{\varepsilon}(t)= \{x_1,\ldots, x_k\}$, with $k$ trees $t_1,\ldots, t_k$ is the action of grafting  these trees to $t$ at the positions where the symbols $\varepsilon_{x_1},\ldots, \varepsilon_{x_k}$ appear. Thus, the obtained tree $t'$ can be seen as a tree with an upper part containing $t$  and lower parts containing exactly the trees $t_1,\ldots,t_k$. Therefore, if $\alpha$ appears  in a lower tree $t_j$, this tree must be quotiented w.r.t. $\alpha$  and  the other parts are  $\varepsilon$-incremented. Moreover, if some  $n$ trees in $t_1,\ldots, t_k$ are equal to $\varepsilon_1,\ldots,\varepsilon_n$, for example  $t_{p_1},\ldots,t_{p_n}$, and if $t' = \alpha(\varepsilon_{x_{p_1}},\ldots,\varepsilon_{x_{p_n}})$ appears in $t$, then $t'$ must be substituted by $\varepsilon_1$ and the other lower trees $t_j$ with $j\neq p_m$, $m\in \{1,\ldots,n\}$  $\varepsilon$-incremented, since the inverse operation produces $t$.
Therefore, by Definition~\ref{def compo lang} we can extend this operation to the case of languages and we find the above formula.

The Bottom-Up quotient for the composition closure  of a language can be computed as follows:
\begin{proposition}[Proposition~$22$ of~\cite{CMOZ17}]
  Let \( L \) be a \( 1 \)-homogeneous language.
  Let \( \alpha \) be a symbol in \( \Sigma_0\cup \Sigma_1 \).
  Then:
  \begin{align*}
    \alpha^{-1}(L^{\circledast}) & =
    \begin{cases}
      (L^\circledast\circ (\alpha^{-1}(L)))\circ(\varepsilon_1, \mathrm{Inc}_{\varepsilon}(1,L^\circledast)) & \text{ if }\alpha\in\Sigma_0, \\
      (L^\circledast\circ (\alpha^{-1}(L)))                                                                  & \text{ otherwise.}
    \end{cases}
  \end{align*}
\end{proposition}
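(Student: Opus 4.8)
The plan is to collapse the closure to a single unfolding and then turn the composition formula~\eqref{eq:derivFormCirc} into a fixed-point equation for $\alpha^{-1}(L^{\circledast})$. Write $\mathrm{Ind}_\varepsilon(L)=\{1\}$, as fixed by the closure convention. First I would record the one-step identity $L^{\circledast}=\{\varepsilon_1\}\cup(L\circ L^{\circledast})$: it follows from $L^{\circledast}=\bigcup_{n}L^{n}$ together with $L^{n}=L\circ L^{n-1}$ and the fact that $\circ$ is defined element-wise, hence distributes over unions in its second argument. Applying the union rule~\eqref{eq:formuleDerUnion} and $\alpha^{-1}(\{\varepsilon_1\})=\emptyset$ (by Proposition~\ref{Ind-Form-Tree}, since $\alpha$ is a genuine symbol), this gives $\alpha^{-1}(L^{\circledast})=\alpha^{-1}(L\circ L^{\circledast})$, reducing everything to a single application of the composition formula.

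Next I would instantiate~\eqref{eq:derivFormCirc} on $L\circ L^{\circledast}$, using that $L$ is $1$-homogeneous, so $k=1$ and the single composed language is $L_1=L^{\circledast}$. The first summand is always $L\circ\alpha^{-1}(L^{\circledast})$. The two cases of the statement then arise from the second summand according to the arity of $\alpha$: if $\alpha\in\Sigma_0$ the activation condition of~\eqref{eq:derivFormCirc} is vacuous and the summand is $\alpha^{-1}(L)\circ(\varepsilon_1,\mathrm{Inc}_\varepsilon(1,L^{\circledast}))$; if $\alpha\in\Sigma_1$ the condition requires $\varepsilon_1\in L^{\circledast}$, which holds precisely because $L^{0}=\{\varepsilon_1\}\subseteq L^{\circledast}$, and there the residual composition is trivial, so the summand collapses to $\alpha^{-1}(L)$. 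Denoting by $B$ this (constant) second summand, both cases produce a left-linear equation of the same shape, $\alpha^{-1}(L^{\circledast})=L\circ\alpha^{-1}(L^{\circledast})\cup B$.

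I would then solve this equation. Its least solution is $L^{\circledast}\circ B$: indeed $L\circ(L^{\circledast}\circ B)\cup B=(\{\varepsilon_1\}\cup L\circ L^{\circledast})\circ B=L^{\circledast}\circ B$ by the unfolding identity and associativity of $\circ$. That $\alpha^{-1}(L^{\circledast})$ is exactly this least solution follows from $\alpha^{-1}(L^{\circledast})=\bigcup_{n}\alpha^{-1}(L^{n})$: iterating the equation yields $\alpha^{-1}(L^{\circledast})\supseteq\bigcup_{p}L^{p}\circ B=L^{\circledast}\circ B$, while every tree of $\alpha^{-1}(L^{\circledast})$ already lies in some $\alpha^{-1}(L^{n})$, so it is obtained after finitely many unfoldings and the reverse inclusion holds as well. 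For $\alpha\in\Sigma_1$ this already gives $L^{\circledast}\circ\alpha^{-1}(L)$; for $\alpha\in\Sigma_0$ I would finally reshuffle $L^{\circledast}\circ\bigl(\alpha^{-1}(L)\circ(\varepsilon_1,\mathrm{Inc}_\varepsilon(1,L^{\circledast}))\bigr)=(L^{\circledast}\circ\alpha^{-1}(L))\circ(\varepsilon_1,\mathrm{Inc}_\varepsilon(1,L^{\circledast}))$ by associativity of composition at the two involved holes, recovering the stated formula.

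The main obstacle is the index bookkeeping in the instantiation of~\eqref{eq:derivFormCirc}, and in particular recognising that for $\alpha\in\Sigma_1$ the second summand is nonempty \emph{exactly} because of the $\varepsilon_1$ coming from $L^{0}$: this contribution encodes the bottom-most copy sitting above the hole and is precisely what makes the $\Sigma_1$ case close up to $L^{\circledast}\circ\alpha^{-1}(L)$ rather than to $\emptyset$. The second delicate point is the passage from the fixed-point equation to the closed form, that is, justifying that $\alpha^{-1}(L^{\circledast})$ coincides with the least solution (equivalently, that the quotient commutes with the infinite union defining $L^{\circledast}$ and that the composition powers telescope), together with the associativity reshuffle in the $\Sigma_0$ case, where one must verify on representatives that the two bracketings graft the same pieces at the hole of index $1$ (the erased $\alpha$) and the hole of index $2$ (the bottom of the lower stack).
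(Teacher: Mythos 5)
The paper never proves this statement: it is recalled as Proposition~22 of~\cite{CMOZ17}, so there is no in-paper argument to compare yours against, and I can only assess your proposal on its own terms. It is essentially correct, and the skeleton is sound: the unfolding \(L^{\circledast}=\{\varepsilon_1\}\cup (L\circ L^{\circledast})\) follows from the element-wise definition of \(\circ\); the term \(\alpha^{-1}(\{\varepsilon_1\})=\emptyset\) is indeed killed by Proposition~\ref{Ind-Form-Tree} together with Equation~\eqref{eq:formuleDerUnion}; your instantiation of Equation~\eqref{eq:derivFormCirc} with \(k=1\), \(L_1=L^{\circledast}\) is right in both cases (vacuous activation condition when \(\alpha\in\Sigma_0\); activation via \(\varepsilon_1\in L^0\subseteq L^{\circledast}\) and a trivial residual composition when \(\alpha\in\Sigma_1\)); and the final reshuffle \(L^{\circledast}\circ\bigl(\alpha^{-1}(L)\circ(\varepsilon_1,\mathrm{Inc}_\varepsilon(1,L^{\circledast}))\bigr)=(L^{\circledast}\circ\alpha^{-1}(L))\circ(\varepsilon_1,\mathrm{Inc}_\varepsilon(1,L^{\circledast}))\) is a legitimate associativity of grafting, since the two bracketings plug the same trees into the same holes.

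The one step you must tighten is the reverse inclusion \(\alpha^{-1}(L^{\circledast})\subseteq L^{\circledast}\circ B\). Satisfying the equation \(X=L\circ X\cup B\) only gives containment \emph{of} the least solution, and iterating the \(L^{\circledast}\)-level equation cannot by itself expel a tree from the residue \(L^{m}\circ\alpha^{-1}(L^{\circledast})\): when \(\varepsilon_1\in L\) one has \(L^{m}\circ X\supseteq X\) for every \(m\), so every element of \(\alpha^{-1}(L^{\circledast})\) survives in that residue forever and no contradiction arises. The per-power route you gesture at (``every tree lies in some \(\alpha^{-1}(L^{n})\)'') is the correct fix, but it needs one missing observation: unfolding \(\alpha^{-1}(L^{n})=L\circ\alpha^{-1}(L^{n-1})\cup B_n\) via Equation~\eqref{eq:derivFormCirc} produces second summands \(B_n\) built from \(L^{n-1}\), not from \(L^{\circledast}\) --- for \(\alpha\in\Sigma_0\), \(B_n=\alpha^{-1}(L)\circ(\varepsilon_1,\mathrm{Inc}_\varepsilon(1,L^{n-1}))\), and for \(\alpha\in\Sigma_1\) the activation condition \(\varepsilon_1\in L^{n-1}\) may even fail. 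One must note \(B_n\subseteq B\) (monotonicity of \(\circ\) and of the quotient), and then the induction \(\alpha^{-1}(L^{n})\subseteq\bigcup_{p<n}L^{p}\circ B\) closes, terminating at \(\alpha^{-1}(L^{0})=\alpha^{-1}(\{\varepsilon_1\})=\emptyset\); combined with the commutation of the quotient with the union \(\bigcup_n L^n\) (immediate from Definition~\ref{def quot lang}), this gives the desired inclusion. With that remark inserted, your proof is complete; the index bookkeeping you worried about (working with \(\mathrm{Ind}_\varepsilon(L)=\{1\}\)) is harmless, since for \(\alpha\in\Sigma_1\) and an \(\varepsilon\)-index other than \(1\) both sides of the claimed identity are empty.
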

The Bottom-Up quotient for the iterated composition of  a language   can be computed as follows:
\begin{proposition}[Proposition~$24$ of~\cite{CMOZ17}] Let \( L \) be a \( 0 \)-homogeneous language.
  Let \( \alpha \) and \( b \) be two symbols in \( \Sigma_0 \).
  Then:
  \begin{align*}
    \alpha^{-1}(L^{*_b}) & =
    \begin{cases}
      {(b^{-1}(L))}^\circledast\cdot_b L^{*_b}                           & \text{ if }\alpha=b, \\
      ({(b^{-1}(L))}^\circledast \circ (\alpha^{-1}(L))) \cdot_b L^{*_b} & \text{otherwise.}    \\
    \end{cases}
  \end{align*}
\end{proposition}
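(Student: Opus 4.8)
The plan is to turn the closure \(L^{*_b}\) into a recursive (fixed-point) equation, to quotient that equation with the formulae already established in the excerpt, and then to solve the resulting recursion in closed form. First I would record the fixed-point identity
\[
L^{*_b} = \{b\} \cup (L \cdot_b L^{*_b}).
\]
It follows directly from the inductive definition \(L^{b,n} = L^{b,n-1} \cup L \cdot_b L^{b,n-1}\), since \(\cdot_b\) distributes over arbitrary unions in its right argument and \(\bigcup_{n\geq 1} L^{b,n-1} = L^{*_b}\). Note also that every tree in \(L^{*_b}\) is ground, so \(L^{*_b}\) is \(0\)-homogeneous and \(\cdot_b L^{*_b}\) is a legal operation.

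Next I would apply \(\alpha^{-1}\) to both sides. Using the union formula~\eqref{eq:formuleDerUnion}, the leaf quotient \(\alpha^{-1}(\{b\}) = \{\varepsilon_1\}\) if \(\alpha = b\) and \(\emptyset\) otherwise (Proposition~\ref{Ind-Form-Tree}), and the quotient formula for the \(b\)-product (Proposition~17 of~\cite{CMOZ17}) with \(L_1 = L\) and \(L_2 = L^{*_b}\), I obtain in both cases a recursion of the shape
\[
D = B \cup (M \circ D), \qquad M = b^{-1}(L) \cdot_b L^{*_b},
\]
where \(D = \alpha^{-1}(L^{*_b})\), \(B = \{\varepsilon_1\}\) when \(\alpha = b\) and \(B = \alpha^{-1}(L)\cdot_b L^{*_b}\) when \(\alpha \neq b\). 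Here I use that \(b^{-1}(L)\) is \(1\)-homogeneous with hole set \(\{1\}\), hence so is \(M\), so that the partial composition \(\circ_1\) appearing in the \(b\)-product formula reduces to ordinary composition \(\circ\). Since \(\circ\) distributes over union, the least solution of this recursion is \(D = M^{\circledast} \circ B\), which for \(\alpha = b\) collapses to \(M^{\circledast}\) because \(\{\varepsilon_1\}\) is a right unit for \(\circ\) on \(1\)-homogeneous languages.

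The last step is to push \(\cdot_b L^{*_b}\) outside the closure by means of the distributivity law
\[
(X \cdot_b L^{*_b}) \circ (Y \cdot_b L^{*_b}) = (X \circ Y) \cdot_b L^{*_b},
\]
valid for \(1\)-homogeneous \(X\) because \(\cdot_b\) never rewrites a hole \(\varepsilon_j\) and because the \(b\)-leaves coming from \(X\) and from \(Y\) occupy disjoint parts of each composed tree, so the two fillings are independent. Iterating this law over the powers \(M^n = (b^{-1}(L))^n \cdot_b L^{*_b}\) and taking the union yields \(M^{\circledast} = (b^{-1}(L))^{\circledast} \cdot_b L^{*_b}\); one further application turns \(M^{\circledast}\circ B\) into \(((b^{-1}(L))^{\circledast} \circ \alpha^{-1}(L)) \cdot_b L^{*_b}\). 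Substituting the value of \(B\) in each case then reproduces exactly the two branches of the claimed formula.

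I expect the main obstacle to be justifying that the derivative \(D\) is precisely the least fixed point \(M^{\circledast}\circ B\) rather than merely some solution of the recursion, since such an equation need not have a unique solution. I would settle this by two inclusions argued along the closure index: writing \(\alpha^{-1}(L^{*_b}) = \bigcup_n \alpha^{-1}(L^{b,n})\) via~\eqref{eq:formuleDerUnion}, the inclusion \(\subseteq\) follows by induction on \(n\), each \(\alpha^{-1}(t')\) with \(t'\in L^{b,n}\) being produced by one unfolding of the \(b\)-product formula feeding into the induction hypothesis; the inclusion \(\supseteq\) follows from the structural reading of the right-hand side, where an element of \(((b^{-1}(L))^{\circledast}\circ\alpha^{-1}(L))\cdot_b L^{*_b}\) is a spine of members of \(b^{-1}(L)\) ending in a member of \(\alpha^{-1}(L)\), whose off-spine \(b\)-leaves are filled from \(L^{*_b}\), so that replacing its unique hole by \(\alpha\) reconstructs a member of \(L^{*_b}\). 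The distributivity lemma above, though routine, is the other place requiring care, precisely in arguing the independence of the \(b\)-fillings under composition.
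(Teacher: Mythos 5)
The paper itself never proves this statement: it is recalled verbatim as Proposition~24 of~\cite{CMOZ17}, so there is no internal proof to compare your attempt against, and it must be judged on its own merits. On those merits your plan is sound and would yield a complete proof. The unfolding identity \(L^{*_b}=\{b\}\cup(L\cdot_b L^{*_b})\), the application of the union formula~\eqref{eq:formuleDerUnion} and of the \(b\)-product quotient formula (Proposition~17 of~\cite{CMOZ17}, with \(L_1=L\), \(L_2=L^{*_b}\)) to obtain the recursion \(D=B\cup(M\circ D)\) with \(M=b^{-1}(L)\cdot_b L^{*_b}\), the recognition that this recursion alone does not pin \(D\) down and must be supplemented by two inclusions proved along the index \(n\) of \(L^{b,n}\), and the law \((X\cdot_b L^{*_b})\circ(Y\cdot_b L^{*_b})=(X\circ Y)\cdot_b L^{*_b}\) used to pull \(\cdot_b L^{*_b}\) out of \(M^{\circledast}\) --- all of these check out against the definitions of Section~\ref{sec:prelim}: \(\cdot_b\) never rewrites a hole \(\varepsilon_j\), each \(b\)-occurrence is substituted independently, \(\cdot_b\) distributes over unions in its \emph{left} argument by definition, and \(\circ\) distributes over unions on both sides because it is defined element-wise. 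Identifying non-uniqueness of solutions of the recursion as the main obstacle, and resolving it by the double inclusion, is exactly the right move.

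One justification you give is, however, false, even though the identity it supports is true. You claim that \(\cdot_b\) distributes over \emph{arbitrary} unions in its right argument; it does not: \(f(b,b)\cdot_b(L_1\cup L_2)\) contains the mixed trees \(f(u,v)\) with \(u\in L_1\), \(v\in L_2\), which \(\bigl(f(b,b)\cdot_b L_1\bigr)\cup\bigl(f(b,b)\cdot_b L_2\bigr)\) does not. What saves the unfolding identity \(L^{*_b}=\{b\}\cup(L\cdot_b L^{*_b})\) is that the union \(\bigcup_n L^{b,n}\) is over an \emph{increasing chain} and every tree has finitely many \(b\)-occurrences, so \(\cdot_b\) commutes with this directed union (equivalently, prove the identity by the same double-inclusion-plus-monotonicity argument you already use elsewhere). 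The repair is local, and your real engine --- the induction on \(n\) for \(\subseteq\) and the spine-reconstruction argument for \(\supseteq\) --- uses only monotonicity of \(\cdot_b\) and \(\circ\), never the false claim, so the architecture of the proof is unaffected.
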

\section{Boolean (Homogeneous) Operations}\label{sec:bool operation}
In the following, we consider that Boolean operations (such as union, intersection, complement, \emph{etc.}) are not necessarily defined for all combinations of languages.
Instead, we will consider particular restrictions of these operations, based on the combination of homogeneous languages with the same \(\varepsilon \)-indices.

As an example, given a \(k\)-homogeneous language \(L\), we denote by \(\neg L\) the set
\begin{equation}\label{eqdef homogene comp}
  \{t \in {T(\Sigma)}_k \mid t\notin L, \mathrm{Ind}_{\varepsilon}(t) = \mathrm{Ind}_{\varepsilon}(L) \}.
\end{equation}
By similarly restricting the classical union to pairs of languages with the same \(\varepsilon \)-indices, one can redefine any Boolean operator as a classical combination of union and complementation (\emph{e.g.} symmetrical difference, set difference, \emph{etc.}).
Let us show how to compute the Bottom-Up quotient of a complemented language.
\begin{proposition}\label{prop lang complem}
  Let \(L\) be an homogeneous language over \(\Sigma \) and \(t \in  T(\Sigma)\).
  Then
  \begin{equation*}
    t^{-1}(\neg L) = \neg(t^{-1}(L)).
  \end{equation*}
\end{proposition}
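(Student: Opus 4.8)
The plan is to reduce the identity to a single, symmetric membership characterization of the quotient that applies \emph{verbatim} to $L$ and to $\neg L$, exploiting the fact that both are $k$-homogeneous with the \emph{same} $\varepsilon$-index set. Throughout, write $R=\mathrm{Ind}_\varepsilon(t)$ and $R'=\mathrm{Ind}_\varepsilon(L)$, and (in the meaningful case $R\subseteq R'$) set $\{x_z\}_{1\leq z\leq k-|R|}=R'\setminus R$ and $S=\{1\}\cup\{x_z+1\}_{1\leq z\leq k-|R|}$. Let $\Phi$ denote the map sending a tree $t''$ with $\mathrm{Ind}_\varepsilon(t'')=S$ to $t''\circ(t,(\varepsilon_{x_z})_z)$. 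First I would unfold Definition~\ref{def quot lang} together with the defining conditions~\eqref{eq def quot tree} of the tree quotient to obtain the reformulation
\[
  t^{-1}(M)=\{\,t''\in T(\Sigma)\mid \mathrm{Ind}_\varepsilon(t'')=S,\ \Phi(t'')\in M\,\},
\]
valid for \emph{any} language $M$ that is $k$-homogeneous with $\mathrm{Ind}_\varepsilon(M)=R'$. The forward inclusion is immediate: $t''\in t^{-1}(s)$ for some $s\in M$ forces $\mathrm{Ind}_\varepsilon(t'')=S$ and $\Phi(t'')=s\in M$; conversely, if $\mathrm{Ind}_\varepsilon(t'')=S$ and $\Phi(t'')\in M$, then taking $s=\Phi(t'')$ gives $t''\in t^{-1}(s)\subseteq t^{-1}(M)$. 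The point to stress is that homogeneity of $M$ makes $R'$, hence $S$ and $\Phi$, the same for every $s\in M$, so the union over $s$ in Definition~\ref{def quot lang} collapses to the single clause $\Phi(t'')\in M$; notably, no injectivity of $\Phi$ is required.

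Since $\neg L$ is, by~\eqref{eqdef homogene comp}, again $k$-homogeneous with $\mathrm{Ind}_\varepsilon(\neg L)=R'$, I can apply the reformulation to both $M=L$ and $M=\neg L$ with the \emph{identical} $S$ and $\Phi$. It then remains to record one structural fact: for every $t''$ with $\mathrm{Ind}_\varepsilon(t'')=S$, grafting recovers $\mathrm{Ind}_\varepsilon(\Phi(t''))=R\cup(R'\setminus R)=R'$, so $\Phi(t'')$ is automatically $k$-ary with index set $R'$. Consequently $\Phi(t'')\in\neg L$ is equivalent to the single clause $\Phi(t'')\notin L$, the index condition in~\eqref{eqdef homogene comp} being satisfied for free. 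The chain of equivalences
\begin{align*}
  t''\in t^{-1}(\neg L)
  &\iff \mathrm{Ind}_\varepsilon(t'')=S \text{ and } \Phi(t'')\notin L \\
  &\iff \mathrm{Ind}_\varepsilon(t'')=S \text{ and } t''\notin t^{-1}(L) \\
  &\iff t''\in\neg\bigl(t^{-1}(L)\bigr)
\end{align*}
then closes the argument, the last step being the definition of the homogeneous complement applied to the $(k-|R|+1)$-homogeneous language $t^{-1}(L)$, whose $\varepsilon$-index set is precisely $S$.

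The main obstacle I anticipate is bookkeeping rather than conceptual: one must track the reindexing ($+1$ shifts) hidden in $S$ and in the composition $\circ$ carefully enough to justify both the reformulation and the identity $\mathrm{Ind}_\varepsilon(\Phi(t''))=R'$, and one must dispose of the degenerate situation $R\not\subseteq R'$, where every quotient is empty and $\neg$ would be applied to $\emptyset$. I would handle this either by the convention fixing the ambient homogeneity class $S$, or simply by restricting, as in the intended membership application (Proposition~\ref{prop eq membership t eps}), to ground trees $t$, for which $R=\emptyset\subseteq R'$ holds trivially and all index-set hypotheses are automatic.
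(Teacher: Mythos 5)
Your proof is correct and is essentially the paper's own argument: the paper also fixes a tree \(t''\) with \(\varepsilon\)-index set \(\{1,{(x_z+1)}_z\}\) and runs the same four-step chain of equivalences \(t''\in t^{-1}(\neg L)\Leftrightarrow t''\circ(t,{(\varepsilon_{x_z})}_z)\in\neg L\Leftrightarrow t''\circ(t,{(\varepsilon_{x_z})}_z)\notin L\Leftrightarrow t''\notin t^{-1}(L)\Leftrightarrow t''\in\neg(t^{-1}(L))\). The only difference is presentational: you make explicit the membership characterization of \(t^{-1}(M)\), the index-set bookkeeping for \(\Phi(t'')\), and the degenerate case \(R\not\subseteq R'\), all of which the paper leaves implicit.
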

\begin{proof}
  Let \(t''\) be a tree in \(T(\Sigma)\) such that
  \begin{equation*}
    \mathrm{Ind}_\varepsilon(t'')=\{1,{(x_z+1)}_{1\leq z\leq k'-k}\}.
  \end{equation*}
  Then
  \begin{align*}
    t'' \in t^{-1}(\neg L)
     & \Leftrightarrow  t''\circ(t, {(\varepsilon_{x_z})}_{1\leq z\leq k-k'}) \in \neg L \\
     & \Leftrightarrow  t''\circ(t, {(\varepsilon_{x_z})}_{1\leq z\leq k-k'}) \notin  L  \\
     & \Leftrightarrow t'' \notin t^{-1}(L)                                              \\
     & \Leftrightarrow t'' \in \neg(t^{-1}(L)).
  \end{align*}
  %\qed%
\end{proof}

As a direct consequence, following Equation~\eqref{eq:formuleDerUnion}, we get the following result.
\begin{corollary}\label{cor:quot op bool}
  Let \((L_1,\ldots,L_k)\) be \(k\)-homogeneous languages with the same \(\varepsilon \)-indices and let \(\mathrm{op}\) be a Boolean operation.
  Then for any tree \(t\) in \(T(\Sigma)\)
  \begin{equation*}
    t^{-1}(\mathrm{op}(L_1,\ldots,L_k)) = \mathrm{op}(t^{-1}(L_1),\ldots,t^{-1}(L_k)).
  \end{equation*}
\end{corollary}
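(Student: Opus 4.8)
The plan is to exploit the functional completeness of the connectives $\{\cup,\neg\}$: every Boolean operation $\mathrm{op}$ on its arguments can be written as a finite expression built solely from union and complementation applied to $L_1,\ldots,L_k$. I would then argue by structural induction on such an expression, discharging each union node with Equation~\eqref{eq:formuleDerUnion} and each complement node with Proposition~\ref{prop lang complem}.

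Before the induction I would settle the homogeneity bookkeeping that makes the restricted operators meaningful. All the $L_i$ share one $\varepsilon$-index set; by Equation~\eqref{eqdef homogene comp} the complement $\neg L_i$ keeps exactly that index set, and a union of two languages with that index set again has it. Consequently every subexpression occurring in the chosen $\{\cup,\neg\}$-decomposition evaluates to a homogeneous language with the same $\varepsilon$-index set, so in particular every complement is applied to a homogeneous language and is therefore defined. Dually, Equation~\eqref{eq def quot tree} shows that the $\varepsilon$-index set of $t^{-1}(L_i)$ is determined only by $\mathrm{Ind}_\varepsilon(t)$ and the common index set of the $L_i$; hence all the quotients $t^{-1}(L_1),\ldots,t^{-1}(L_k)$ are homogeneous with one and the same $\varepsilon$-index set, which is exactly what is needed for the right-hand side $\mathrm{op}(t^{-1}(L_1),\ldots,t^{-1}(L_k))$ to be a legitimate instance of the restricted operator.

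With this invariant in hand the induction is immediate. In the base case the expression is a single argument $L_i$, and the claim reduces to the identity $t^{-1}(L_i)=t^{-1}(L_i)$. For the inductive step I split on the outermost connective. If the expression is $\neg E$, then Proposition~\ref{prop lang complem} gives $t^{-1}(\neg E)=\neg(t^{-1}(E))$ (legitimately, since $E$ evaluates to a homogeneous language), and $t^{-1}(E)$ is rewritten by the induction hypothesis. If the expression is $E_1\cup E_2$, then Equation~\eqref{eq:formuleDerUnion} gives $t^{-1}(E_1\cup E_2)=t^{-1}(E_1)\cup t^{-1}(E_2)$, and each summand is rewritten by the induction hypothesis. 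Propagating the quotient down to the leaves replaces every occurrence of $L_i$ by $t^{-1}(L_i)$ while leaving the Boolean structure untouched, which yields precisely $\mathrm{op}(t^{-1}(L_1),\ldots,t^{-1}(L_k))$.

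The one point I would treat with care, and the likely source of any subtlety, is the homogeneity invariant of the second paragraph: because the restricted Boolean operators are only partially defined, I must ensure that the particular decomposition of $\mathrm{op}$ never unions two languages with different $\varepsilon$-index sets nor complements a non-homogeneous one. Since complementation preserves the index set exactly and the decomposition only ever recombines the $L_i$ (which all share that set) together with their sub-combinations, the invariant is maintained throughout; once it is secured, commutation with union and with complement carries the whole argument.
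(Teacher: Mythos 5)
Your proposal is correct and takes essentially the same route as the paper: the paper obtains Corollary~\ref{cor:quot op bool} as a direct consequence of Equation~\eqref{eq:formuleDerUnion} and Proposition~\ref{prop lang complem}, implicitly relying on exactly the decomposition of an arbitrary Boolean operation into unions and complements that you carry out by structural induction. Your explicit check that the $\varepsilon$-index sets are preserved at every union and complement node is a faithful elaboration of the homogeneity restriction the paper sets up in Section~\ref{sec:bool operation} but does not re-verify.
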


The restriction to homogeneous languages needs a small modification in terms of computation.
Indeed, let us consider Equation~\eqref{eq:derivFormCirc}.
It is necessary to determine whether~\(\varepsilon_j\) belongs to a given language: how can we decide whether it belongs to~\(\neg \emptyset \)?
There are two alternatives of this barred notation, because the complementation function needs to know the~\( \varepsilon \)-index set of the language it complements.
Either we can specify the restriction by parameterizing the operators, or we specify only the   occurrences of the empty set symbol, leading to the consideration of expressions instead of languages.
This is the approach that we will consider in the following: the languages and the expressions will be subtyped w.r.t. the sets of~\(\varepsilon \)-indices.

\section{Extended Tree Expressions}\label{sec:tree expression}
An \emph{extended tree expression} (tree expression for short) \(E\) over \(\Sigma \) is inductively defined by
\begin{gather}
  \begin{aligned}
    E & = f(E_1, \ldots, E_n),         & E & = \varepsilon_j,     &
    E & = \emptyset_{\mathcal{I}},                                  \\
    E & = \mathrm{op}(E_1,\ldots,E_n), &
    E & = E' \circ (E_1,\ldots, E_n),  & E & = E_1^{\circledast},   \\
    E & = E_1 \cdot_a E_2,             & E & = E_1^{*_a},
  \end{aligned}
  \label{eq:derivFormExp}
\end{gather}
where \(f\) is a symbol in \(\Sigma_n\), \((E', E_1, \ldots, E_n)\) are \((n+1)\) tree expressions over \(\Sigma \), \(j\) is a positive integer, \(\mathcal{I}\) is a set of integers, \(\mathrm{op}\) is an~\(n\)-ary Boolean operator and \(a\) is a symbol in \(\Sigma_0\).
We denote the expression \(\emptyset_\emptyset \) by \(\emptyset \).

The set of \(\varepsilon \)-indices \(\mathrm{Ind}_{\varepsilon}(E)\)  of a tree expression \(E\), that we use to distinguish tree expressions (\emph{e.g.} distinct occurrences of \(\emptyset \)), is inductively defined, following Equation~\eqref{eq:derivFormExp}, by
\begin{gather*}
  \begin{aligned}
    \mathrm{Ind}_{\varepsilon}(\varepsilon_j)           & = \{j\},                                                                   & \qquad
    \mathrm{Ind}_{\varepsilon}(\emptyset_{\mathcal{I}}) & = \mathcal{I},                                                                      \\
    \mathrm{Ind}_{\varepsilon}(E_1 \cdot_a E_2)         & = \mathrm{Ind}_{\varepsilon}(E_1)\cup \mathrm{Ind}_{\varepsilon}(E_2),     &
    \mathrm{Ind}_{\varepsilon}(E_1^{\circledast})       & = \mathrm{Ind}_{\varepsilon}(E_1^{*_a}) = \mathrm{Ind}_{\varepsilon}(E_1).          \\
    \mathrm{Ind}_{\varepsilon}(f(E_1, \ldots, E_n))     & = \mathrm{Ind}_{\varepsilon}(\mathrm{op}(E_1, \ldots, E_n))                         \\ &=  \mathrm{Ind}_{\varepsilon}(E' \circ (E_1,\ldots, E_n))\\
                                                        & = \bigcup_{1\leq k\leq n} \mathrm{Ind}_{\varepsilon}(E_k),
  \end{aligned}\\
\end{gather*}
In the following, we restrict the set of tree expressions that we deal with in order to simplify the different computations.
More formally, we define the notion of valid tree expression, that rejects (for instance) non-homogeneous tree expressions:
A tree expression \(E\) is \emph{valid} if it satisfies the predicate \(V(E)\) inductively defined, following Equation~\eqref{eq:derivFormExp}, by
\begin{align*}
  V(\varepsilon_j)               & = V(\emptyset_{\mathcal{I}}) = \mathrm{True},                                                                                                                       \\
  V(f(E_1, \ldots, E_n))         & = (\bigwedge_{1\leq k\leq n}V(E_k)) \wedge (\bigwedge_{1\leq k < k' \leq n} (\mathrm{Ind}_{\varepsilon}(E_k) \cap \mathrm{Ind}_{\varepsilon}(E_{k'}) = \emptyset)), \\
  V(\mathrm{op}(E_1,\ldots,E_n)) & = (\bigwedge_{1\leq k\leq n}V(E_k)) \wedge (\bigwedge_{1\leq k  < n} (\mathrm{Ind}_{\varepsilon}(E_k) = \mathrm{Ind}_{\varepsilon}(E_{k+1}))),                      \\
  V(E' \circ (E_1,\ldots, E_n))
                                 & = V(E') \wedge (\bigwedge_{1\leq k\leq n}V(E_k)) \wedge (\mathrm{Card}(\mathrm{Ind}_{\varepsilon}(E')) = n)                                                         \\
                                 & \qquad \wedge (\bigwedge_{1\leq k < k' \leq n} (\mathrm{Ind}_{\varepsilon}(E_k) \cap \mathrm{Ind}_{\varepsilon}(E_{k'}) = \emptyset)),                              \\
  V(E_1^{\circledast})           & = V(E_1) \wedge (\mathrm{Card}(\mathrm{Ind}_{\varepsilon}(E_1)) = 1),                                                                                               \\
  V(E_1 \cdot_a E_2)             & = V(E_1) \wedge V(E_2) \wedge (\mathrm{Ind}_{\varepsilon}(E_2) = \emptyset),                                                                                        \\
  V(E_1^{*_a})                   & = V(E_1)\wedge (\mathrm{Ind}_{\varepsilon}(E_1) = \emptyset).
\end{align*}
The language \(L(E)\) \emph{denoted} by a valid tree expression \(E\) with an \(\varepsilon \)-index set \(\mathcal{I}\) is inductively defined by
\begin{gather*}
  \begin{aligned}
    L(f(E_1, \ldots, E_n))         & = f(L(E_1), \ldots, L(E_n)),           & L(\varepsilon_j)     & = \{\varepsilon_j\},        \\
    L(\mathrm{op}(E_1,\ldots,E_n)) & = \mathrm{op}'(L(E_1), \ldots,L(E_n)), &
    L(\emptyset_{\mathcal{I}})     & = \emptyset,                                                                                \\
    L(E' \circ (E_1,\ldots, E_n))  & = L(E') \circ (L(E_1), \ldots L(E_n)), & L(E_1^{\circledast}) & = {(L(E_1))}^{\circledast}, \\
    L(E_1 \cdot_a E_2)             & = L(E_1) \cdot_a L(E_2),               & L(E_1^{*_a})         & = {(L(E_1))}^{*_a},
  \end{aligned}
\end{gather*}
where \(f\) is a symbol in \(\Sigma_n\), \((E', E_1, \ldots, E_n)\) are \((n+1)\) tree expressions over \(\Sigma \), \(j\) is a positive integer, \(\mathrm{op}\) is an~\(n\)-ary Boolean operator, \(\mathrm{op}'\) is  an~\(n\)-ary Boolean operation over homogeneous languages with \(\mathcal{I}\) as \(\varepsilon \)-index
set (\emph{e.g. Equation~\eqref{eqdef homogene comp}})  and \(a\) is a symbol in \(\Sigma_0\).
From these definitions, we can define the derivation formulae for valid tree expressions w.r.t.\ symbols and trees as a syntactical transcription of the quotient formulae.
\begin{definition}\label{def deriv form eps}
  Let \(E\) be a valid tree expression and \(j\) be an \( \varepsilon \)-index of \(E\).
  Then \(d_{\varepsilon_j}(E)\) is obtained by incrementing all the \( \varepsilon \)-indices of \(E\) by \(1\) except \(\varepsilon_j\) which is replaced by \(\varepsilon_1\).
\end{definition}
\begin{definition}\label{def deriv form symb}
  Let \(\alpha \) be a symbol in \(\Sigma_n\) and \(F\) be a valid tree expression over \(\Sigma \) containing \( \{1,\ldots,n\} \) as \(\varepsilon \)-indices.
  The \emph{derivative} of \(F\) w.r.t.\ to \(\alpha \) is the expression inductively defined by
  \begin{align*}
    d_\alpha(\emptyset_{\mathcal{I}})                    & = \emptyset_{\{1\} \cup \{i+1 \mid i > n, i\in\mathcal{I}\}},                                                             \\
    d_\alpha(\varepsilon_1)                              & = \emptyset_{\{1\}},                                                                                                      \\
    d_\alpha(\alpha(\varepsilon_1,\ldots,\varepsilon_n)) & = \varepsilon_1,                                                                                                          \\
    d_\alpha(f(E_1,\ldots,E_m))                          & = \sum_{1\leq j\leq n}  f(\underline{E_1}, \ldots, \underline{E_{j-1}},
    d_\alpha(t_j),
    \underline{E_{j+1}},
    \ldots,
    \underline{E_m}),                                                                                                                                                                \\
                                                         & \qquad + \varepsilon_1 \text{ if }\alpha = f \wedge \forall i\leq m, \varepsilon_i\in L(E_i),                             \\
    d_\alpha(\mathrm{op}(E_1,\ldots,E_k))                & = \mathrm{op}(d_\alpha(E_1), \ldots, d_\alpha(E_k)),                                                                      \\
    d_\alpha(E_1\cdot_b E_2)                             & =
    \begin{cases}
      (d_b(E_1)\cdot_b E_2) \circ_1 d_b(E_2)                                 & \text{ if }\alpha=b,                         \\
      d_\alpha(E_1)\cdot_b E_2 + (d_b(E_1)\cdot_b E_2) \circ_1 d_\alpha(E_2) & \text{ if }\alpha\in\Sigma_0\setminus \{b\}, \\
      d_\alpha(E_1)\cdot_b E_2                                               & \text{otherwise,}                            \\
    \end{cases}                                                                                                                                                       \\
    d_\alpha(E\circ(E_1,\ldots,E_k))                     & =  \sum_{1\leq j\leq k} E\circ  (({\underline{E_l})}_{1\leq l\leq j},d_\alpha(E_j),({\underline{E_l})}_{j+1\leq l\leq k}) \\
                                                         & \quad +
    \begin{cases}
      d_{
      \alpha({(\varepsilon_{j_{p_l}})}_{1\leq l\leq n})
      }
      (E) \circ  (\varepsilon_1,{(\underline{E_l})}_{1\leq l\leq k\mid \forall z, l\neq p_z})        \\
      \quad \text{ if } \forall 1\leq l\leq n, \exists 1\leq p_l\leq k, \varepsilon_l \in L(E_{p_l}) \\
      \emptyset_{\mathrm{Ind}_\varepsilon(E\circ(E_1,\ldots,E_k))\setminus \{1,\ldots,n\}} \quad \text{ otherwise,}
    \end{cases}                                                                                                                                                       \\
    d_\alpha(E^{\circledast})                            & =
    \begin{cases}
      (E^\circledast\circ (d_\alpha(E)))\circ(\varepsilon_1, \mathrm{Inc}_{\varepsilon}(1,E^\circledast)) & \text{ if }\alpha\in\Sigma_0, \\
      (E^\circledast\circ (d_\alpha(E)))                                                                  & \text{ otherwise,}
    \end{cases}                                                                                                                                                       \\
    d_\alpha(E^{*_b})                                    & =
    \begin{cases}
      {(d_b(E))}^\circledast\cdot_b E^{*_b}                        & \text{ if }\alpha=b, \\
      ({(d_b(E))}^\circledast \circ (d_\alpha(E))) \cdot_b E^{*_b} & \text{otherwise,}    \\
    \end{cases}
  \end{align*}
  where
  \(d_{\alpha(\varepsilon_{j_1},\ldots,\varepsilon_{j_n})(E)}
  =
  d_{\alpha}(
  d_{\varepsilon_{j_{1} + n - 1}}(
  \cdots
  d_{\varepsilon_{j_{n-1} + 1}}(
  d_{\varepsilon_{j_n}}(E)
  )
  \cdots
  )
  )
  \),
  where  for all integers~\(i\) the expression \(\underline{E_i}\) equals  \(\mathrm{Inc}_\varepsilon(1,E_i)\)  and where
  \(\circ_1\) is the partial composition defined by
  \(
  E \circ_1 E' = E\circ (E', {(\varepsilon_l)}_{  l \in \{j_2,\ldots, j_k \}})
  \)
  with \(\mathrm{Ind}_\varepsilon(E)= \{j_1,\ldots,j_k\} \).
\end{definition}
\begin{definition}\label{def deriv form tree}
  Let \(t=f(t_1,\ldots,t_n)\) be a tree in \(T(\Sigma)\) and \(E\) a valid tree expression over \(\Sigma \) such that
  \(
  \mathrm{Ind}_\varepsilon(t)\subseteq \mathrm{Ind}_\varepsilon(E).
  \)
  The \emph{derivative} of \(E\) w.r.t. \( t \) is the tree expression defined by
  \begin{equation*}
    d_t(E)=(d_f(d_{t'_1}(\cdots (d_{t'_k}(E))\cdots))\circ(\varepsilon_1,{(\varepsilon_{y_z+1})}_{1\leq z\leq n-l})),
  \end{equation*}
  where  \( \{y_1,\ldots,y_{n-l}\} = \mathrm{Ind}_\varepsilon(E)\setminus \mathrm{Ind}_\varepsilon(t) \) and \(\forall 1\leq j\leq  k \), \( t'_j=\mathrm{Inc}_\varepsilon(k-j,t_j) \).
\end{definition}
Notice that the base cases include the one of the derivation of the empty set.
In this case, the only modification that occurs is the index simulating the \(\varepsilon \)-index set of the denoted language.
This is necessary in order to validate Equation~\eqref{eq def quot tree}.
As an example, consider the expression \(E=\neg \emptyset_\emptyset \).
When deriving \(E\) w.r.t.\ a nullary tree \(a\), one must obtain an expression denoting all the trees that  belong  to \(T(\Sigma)\) in which one~\(a\) was removed, that is the set of all the trees with only \(\varepsilon_1\) as an~\(\varepsilon \)-index.
Applying the previously defined formulae:
\begin{equation*}
  d_a(E) = \neg \emptyset_{\{1\}}.
\end{equation*}
When deriving one more time w.r.t. \(a\), one must obtain an expression denoting all the trees with only \(\varepsilon_1\) and \(\varepsilon_2\) as \( \varepsilon\)-indices (obtained from a tree in \(T(\Sigma)\) by removing  two occurrences of \(a\)).
Applying the previously defined formulae:
\begin{equation*}
  d_a(d_a(E)) = \neg \emptyset_{\{1, 2\}}.
\end{equation*}
Finally, when deriving by a binary symbol \(f\), one must obtain an expression denoting all the trees that belong to \(T(\Sigma)\) in which one occurrence of \(f(a,a)\) was removed, that is the set of all the trees with only \( \varepsilon_1\) as \(\varepsilon\)-index.
Applying the previously defined formulae:
\begin{equation*}
  d_f(d_a(d_a(E))) = \neg \emptyset_{\{1\}}.
\end{equation*}

As a direct consequence of the inductive formulae of Section~\ref{sec:quotients form} and of Corollary~\ref{cor:quot op bool}, we get the following theorem.
\begin{theorem}\label{thm lang deriv}
  The derivative of a valid tree expression \(E\) w.r.t.\ to a tree \(t\) denotes \(t^{-1}(L(E))\).
\end{theorem}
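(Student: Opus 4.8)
The plan is to prove the stronger pointwise statement $L(d_\alpha(E)) = \alpha^{-1}(L(E))$ for every symbol $\alpha \in \Sigma_n$ and every valid tree expression $E$ containing $\{1,\ldots,n\}$ among its $\varepsilon$-indices, and then to lift it to arbitrary trees through Definition~\ref{def deriv form tree}. Since each clause of Definition~\ref{def deriv form symb} is a literal syntactic copy of the matching quotient identity of Section~\ref{sec:quotients form}, the argument is a structural induction on $E$ following the grammar~\eqref{eq:derivFormExp}, where at each step one applies the induction hypothesis to the immediate subexpressions and then invokes the corresponding quotient proposition together with the fact that $L(\cdot)$ is, by definition, a homomorphism for the operators $f$, $\mathrm{op}'$, $\circ$, $\cdot_a$, $\circledast$ and $*_a$. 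Throughout, I would carry along a second invariant, namely that $\mathrm{Ind}_\varepsilon(d_\alpha(E))$ coincides with $\mathrm{Ind}_\varepsilon(\alpha^{-1}(L(E)))$ as prescribed by~\eqref{eq def quot tree}; this is exactly what makes the $\emptyset_{\mathcal I}$-bookkeeping meaningful and keeps the Boolean clauses well typed.

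First I would dispatch the reindexing derivative of Definition~\ref{def deriv form eps}: $d_{\varepsilon_j}(E)$ increments every $\varepsilon$-index by one except $\varepsilon_j$, which becomes $\varepsilon_1$, and this is precisely the reindexing that Proposition~\ref{Ind-Form-Lang} (Equation~\eqref{eq quot eps}) performs on $L(E)$. A one-line auxiliary induction shows that this purely syntactic reindexing of an expression denotes the same reindexing of its language, giving $L(d_{\varepsilon_j}(E)) = \varepsilon_j^{-1}(L(E))$. With the symbol and the $\varepsilon$-index cases in hand, the tree case is immediate: Definition~\ref{def deriv form tree} is the verbatim transcription of the quotient identity~\eqref{eq quot lang by tree}, so that $L(d_t(E)) = t^{-1}(L(E))$ follows by applying the symbol identity to $d_f$, the already established identity to each $d_{t'_j}$ (an induction on the height of $t$), and the fact that the trailing composition with $(\varepsilon_1,(\varepsilon_{y_z+1})_z)$ commutes with $L(\cdot)$.

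The core of the work is the symbol induction. The three base cases ($\emptyset_{\mathcal I}$, $\varepsilon_1$, $\alpha(\varepsilon_1,\ldots,\varepsilon_n)$) match Proposition~\ref{Ind-Form-Tree}, the only subtlety being the index arithmetic of $d_\alpha(\emptyset_{\mathcal I}) = \emptyset_{\{1\}\cup\{i+1\mid i>n,\ i\in\mathcal I\}}$, which denotes $\emptyset = \alpha^{-1}(\emptyset)$ and carries exactly the index set demanded by the invariant. For the head-symbol case $f(E_1,\ldots,E_m)$ one reads off Proposition~\ref{Ind-Form-Tree}; for $\mathrm{op}(E_1,\ldots,E_k)$ one uses Corollary~\ref{cor:quot op bool} (itself resting on Proposition~\ref{prop lang complem} and~\eqref{eq:formuleDerUnion}); the $b$-product, the composition, the closure $\circledast$, and the iterated composition $*_b$ cases are each aligned with the respective quotient propositions of Section~\ref{sec:quotients form}, applying the induction hypothesis to $d_\alpha(E_i)$, $d_b(E_i)$ and $d_\alpha(E)$ wherever they occur.

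The step I expect to be the main obstacle is the composition clause $d_\alpha(E\circ(E_1,\ldots,E_k))$, which must be matched against Equation~\eqref{eq:derivFormCirc}. Two points need care. First, the nested reindexing derivative $d_{\alpha(\varepsilon_{j_{p_1}},\ldots,\varepsilon_{j_{p_n}})}(E) = d_\alpha(d_{\varepsilon_{j_{p_1}+n-1}}(\cdots d_{\varepsilon_{j_{p_n}}}(E)\cdots))$ has to be shown to denote the quotient ${\alpha((\varepsilon_{j_{p_l}})_l)}^{-1}(L(E))$; this follows by composing the $\varepsilon$-index identity $n$ times and then the symbol identity, but the index shifts must be tracked exactly. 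Second, the side condition ``$\forall l,\ \exists p_l,\ \varepsilon_l\in L(E_{p_l})$'' selecting the second summand must be decidable at the expression level and must coincide with the hypothesis of~\eqref{eq:derivFormCirc} at the language level; this is precisely where the $\varepsilon$-index invariant on $\emptyset_{\mathcal I}$ is indispensable, since without it the membership test $\varepsilon_l\in L(E_{p_l})$ on complemented or empty subexpressions would be ill-defined. Once these two points are settled, the remaining clauses are routine bookkeeping.
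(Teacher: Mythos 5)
Your proposal follows essentially the same route as the paper's own proof: establish \(L(d_{\varepsilon_j}(E)) = \varepsilon_j^{-1}(L(E))\) from Definition~\ref{def deriv form eps} and Equation~\eqref{eq quot eps}, prove \(L(d_\alpha(E)) = \alpha^{-1}(L(E))\) by structural induction using the syntactic correspondence with the quotient formulae (singling out exactly the same two delicate points, namely the \(\varepsilon\)-index bookkeeping for \(\emptyset_{\mathcal{I}}\) and the nested derivative \(d_{\alpha(\varepsilon_{j_1},\ldots,\varepsilon_{j_n})}\) in the composition case, resolved by chaining the \(\varepsilon\)-derivatives and then the symbol case via Equation~\eqref{eq quot lang by tree}), and finally lift to arbitrary trees through Definition~\ref{def deriv form tree}. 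Your explicit invariant on \(\mathrm{Ind}_\varepsilon\) is a slightly more careful formulation of what the paper states informally, but the decomposition and the key steps coincide.
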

\begin{proof}
  Let us proceed in three steps.
  \begin{enumerate}
    \item\label{step1}
    Following Equation~\eqref{eq quot eps} and Definition~\ref{def deriv form eps}, it holds that
    \begin{equation*}
      L(d_{\varepsilon_j}(E)) = \varepsilon_j^{-1}(L(E)).
    \end{equation*}
    \item\label{step2}
    Notice that the derivation formulae of Definition~\ref{def deriv form symb} are syntactical equivalents of the quotient formulae of Section~\ref{sec:quotients form} and of Corollary~\ref{cor:quot op bool} and therefore it can be proved by induction over the structure of \(E\) that
    \begin{equation*}
      L(d_{\alpha}(E)) = \alpha^{-1}(L(E)).
    \end{equation*}
    This reasoning is valid except for the case with indexed occurrence of the  empty set and  with the case of the substitution product.
    As discussed before, the occurrences of the empty set are ``typed'' w.r.t.\ the \(\varepsilon \)-indices set of the language they denote. Therefore, these indices should be modified using Equation~\eqref{eq def quot tree}.
    In the product case, the derivation of an expression by the tree \(\alpha(\varepsilon_{j_1},\ldots,\varepsilon_{j_n})\) has to be considered.
    However, by considering this particular case in the induction, one can check that
    \begin{align*}
      L(d_{\alpha(\varepsilon_{j_1},\ldots,\varepsilon_{j_n})}(E)) & =
      L(
      d_{\alpha}(
      d_{\varepsilon_{j_{1} + n - 1}}(
      \cdots
      d_{\varepsilon_{j_{n-1} + 1}}(
      d_{\varepsilon_{j_n}}(E)
      )
      \cdots
      )
      ))                                                               \\
                                                                   & =
      \alpha^{-1} (L(
      d_{\varepsilon_{j_{1} + n - 1}}(
      \cdots
      d_{\varepsilon_{j_{n-1} + 1}}(
      d_{\varepsilon_{j_n}}(E)
      )
      \cdots
      )
      )),
    \end{align*}
    this last equality obtained by applying the induction step.
    From item~\ref{step1}, it holds that
    \begin{align*}
      \alpha^{-1}(
      d_{\varepsilon_{j_{1} + n - 1}}(
      \cdots
      d_{\varepsilon_{j_{n-1} + 1}}(
      d_{\varepsilon_{j_n}}(E)
      )
      \cdots
      )
      ) & =
      \alpha^{-1}(
      {\varepsilon_{j_{1} + n - 1}}^{-1}(
      \cdots
      {\varepsilon_{j_{n-1} + 1}}^{-1}(
      L(E)
      )
      \cdots
      )
      )
    \end{align*}
    that equals \({\alpha(\varepsilon_{j_1},\ldots,\varepsilon_{j_n})}^{-1}(L(E))\) from Equation~\eqref{eq quot lang by tree}.
    We can conclude following Equation~\eqref{eq:derivFormCirc}.
    \item
          Finally, according to Equation~\eqref{eq quot lang by tree} and item~\ref{step2}, Definition~\ref{def deriv form tree} implies that
          \begin{equation*}
            L(d_t(E)) = t^{-1}(L(E)).
          \end{equation*}
  \end{enumerate}
  %\qed%
\end{proof}

\begin{example}\label{ex:derivative}
  Let us consider the graded alphabet defined by \( \Sigma_2=\{f\} \),  \( \Sigma_1=\{g\} \) and \( \Sigma_0=\{a,b,c\} \) and let \( E \) be the extended tree expression defined by
  \begin{equation*}
    E = E_1 \cdot_a E_2,
  \end{equation*}
  with \(E_1 = \neg({g(a)}^{*_a})\) and \(E_2 = f(f(a,a),a) \).
  Let us show how to calculate the derivative of \( E \) w.r.t. \(t = f(f(a,a),a)\).
  First, let us compute the derivative of \(E_2\) w.r.t. \(t\):
  \begin{align*}
    d_a(E_2)             & =  f(f(\varepsilon_1,a)+f(a,\varepsilon_1),a)+f(f(a,a),\varepsilon_1),                                    \\
    d_a(d_a(E_2))        & =  f(f(\varepsilon_2,\varepsilon_1)+f(\varepsilon_1,\varepsilon_2),a)
    + f(f(\varepsilon_2,a)+ f(a,\varepsilon_2),\varepsilon_1)                                                                        \\
                         & \qquad + f(f(\varepsilon_1,a)+f(a,\varepsilon_1),\varepsilon_2),                                          \\
    d_a(d_a(d_a(E_2)))   & =  f(f(\varepsilon_3,\varepsilon_2)+f(\varepsilon_2,\varepsilon_3),\varepsilon_1)
    + f(f(\varepsilon_3,\varepsilon_1)+f(\varepsilon_1,\varepsilon_3),\varepsilon_2)                                                 \\
                         & \qquad + f(f(\varepsilon_2,\varepsilon_1)+f(\varepsilon_1,\varepsilon_2),\varepsilon_3),                  \\
    d_{f(a,a)}(d_a(E_2)) & = d_{f(\varepsilon_1, \varepsilon_2)}(d_a(d_a(d_a(E_2)))) \circ (\varepsilon_1, \varepsilon_2)            \\
                         & =
    (\emptyset_{\{1,4\}} + \emptyset_{\{1,4\}} + f(\emptyset_1 + \varepsilon_1 ,\varepsilon_4)) \circ (\varepsilon_1, \varepsilon_2) \\
                         & = f(\varepsilon_1 ,\varepsilon_4) \circ (\varepsilon_1, \varepsilon_2)= f(\varepsilon_1, \varepsilon_2),  \\
    d_{t}(E_2)           & = d_{f(\varepsilon_1, \varepsilon_2)} (d_{f(a,a)}(d_a(E_2))) = \varepsilon_1.
  \end{align*}
  Then, in order to reduce the size of the computed tree expressions, let us set
  \begin{align*}
    E'  & = \neg({g(\varepsilon_1)}^{\circledast})\cdot_a E_2, &
    E'' & = \neg(\emptyset_{\{1,2\}})\cdot_a E_2.
  \end{align*}
  Then:
  \begin{align*}
    d_a(E)               & = E' \circ d_a(E_2),                                                                                            \\
    d_{f(a,a)}(d_{a}(E)) & = E'' \circ (f(\varepsilon_1, a),\varepsilon_4 ) \circ  (\varepsilon_1, \mathrm{Inc}_{\varepsilon}(1,d_a(E_2)))
    + E' \circ d_{f(a,a)}(d_a(E_2)) ,                                                                                                      \\
    d_t(E)               & = E'.
  \end{align*}
\end{example}

\section{Tree Automaton Construction}\label{sec: Auto Construction}
In this section, we explain how we can compute a tree automaton from a valid tree expression \(E\) with \(\mathrm{Ind}_\varepsilon(E) = \emptyset \) from an iterated process using the previously defined derivation.

Given a tree expression \(E\) over an alphabet \(\Sigma \), we first compute the set
\begin{equation*}
  D_0(E) = \{d_a(E)\mid a\in\Sigma_0\}.
\end{equation*}
From this set, we compute the tree automaton \(A_0=(\Sigma,D_0(E),F_0,\delta_0)\) where
\begin{align*}
  F_0      & = \{E'\in D_0(E) \mid \varepsilon_1 \in L(E')\}, &
  \delta_0 & = \{(a, d_a(E)) \mid a \in \Sigma_0\}.
\end{align*}
From this step, one can choose a total function \(\mathrm{tree}_0\) associating any tree expression \(E'\) in \(D_0(E)\) with a tree \(t\) such that
\begin{equation*}
  \mathrm{tree}_0(E') = t \Rightarrow d_t(E) = E',
\end{equation*}
by choosing for any tree expression \(E'\) in \(D_0(E)\) a symbol \(a\in\Sigma_0\) such that \((a,E') \in \delta_0\).
From this induction basis, let us consider the transition set \(\delta_n\) inductively defined by
\begin{align}
  \begin{split}
    \delta_n = \{((E'_1, \ldots, E'_m), f, d_t(E)) \mid
    & t = f(\mathrm{tree}_{n-1}(E'_1),\ldots, \mathrm{tree}_{n-1}(E'_m)),\\
    & f \in \Sigma_m,\\
    & E'_1,\ldots,E'_m \in D_{n-1}(E)\}.
    \label{eq:delta}
  \end{split}
\end{align}
Let us consider the set
\(
D_n(E) = D_{n-1}(E) \cup \pi_3(\delta_n)
\),
where \(\pi_3\) is the classical projection defined by
\(
\pi_3(X) = \{z \mid (\_,\_,z) \in X\}
\).
Obviously, one can once again choose a total function \(\mathrm{tree}_n\) associating any tree expression \(E'\) in \(D_n(E)\) with a tree \(t\) such that
\begin{equation*}
  \mathrm{tree}_n(E') = t \Rightarrow d_t(E) = E',
\end{equation*}
by choosing a transition \(((E'_1, \ldots, E'_m), f, E')\) in \(\delta_n\) for any tree expression \(E'\) in \(D_{n}(E)\setminus (D_{n-1}(E))\)
and defining \(t\) as \(f(\mathrm{tree}_{n-1}(E'_1), \ldots, \mathrm{tree}_{n-1}(E'_k))\).
Finally, by considering the set
\begin{equation}
  F_n = \{E' \in D_n(E) \mid \varepsilon_1 \in L(E')\},
  \label{eq:final states}
\end{equation}
we can define the tree automaton \(A_n=(\Sigma,D_n(E),F_n,\delta_n)\).

Let \(A(E)\) be the fixed point, if it exists, of this process (up to the choice of the \(\mathrm{tree}_*\) functions), called the \emph{Bottom-Up derivative tree automaton} of \(E\).

First, notice that the construction leads to a deterministic tree automaton.
Let us then state that the validity of the construction does not depend on the choice of the \(\mathrm{tree}_n\) functions.
By a direct induction over the structure of \(t\), considering the definition of \(\delta \) in Equation~\eqref{eq:delta}, we get the following proposition.
\begin{proposition}\label{prop Delta t dt}
  Let \(E\) be a valid tree expression over an alphabet \(\Sigma \), \(t\) be a nullary tree in \(T(\Sigma) \) and \(A(E)=(\Sigma,Q,F,\delta)\) be a Bottom-Up derivative tree automaton of \(E\).
  Let \(\Delta(t) = \{E'\} \).
  Then
  % \begin{equation*}
  \(L(E') = t^{-1}(L(E))\).
  % \end{equation*}
\end{proposition}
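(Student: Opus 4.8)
The plan is to prove the slightly more precise claim that, for \emph{every} nullary tree $t$, the automaton $A(E)$ reaches a single state $E'$ (this is guaranteed since the construction is deterministic and adds, at the fixed point, a transition for every tuple of reachable states, so that $\Delta$ is total on ground trees) and that this $E'$ satisfies $L(E')=t^{-1}(L(E))$. I would argue by structural induction on $t$. The engine of the whole argument is Theorem~\ref{thm lang deriv}: for any tree $s$, an expression of the syntactic shape $d_s(E)$ denotes $s^{-1}(L(E))$. The real work is to reconcile the tree $t$ actually read with the representative trees produced by the $\mathrm{tree}_{*}$ functions, on which the transitions of Equation~\eqref{eq:delta} are based.

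For the base case $t=a\in\Sigma_0$, the transition $(a,d_a(E))$ lies in $\delta_0$, hence in $\delta$, so $\Delta(a)=\{d_a(E)\}$ and Theorem~\ref{thm lang deriv} directly gives $L(d_a(E))=a^{-1}(L(E))$. For the inductive step $t=f(t_1,\ldots,t_m)$, I would apply the induction hypothesis to each $t_i$ to get $\Delta(t_i)=\{E'_i\}$ with $L(E'_i)=t_i^{-1}(L(E))$. Setting $s_i=\mathrm{tree}(E'_i)$, the defining property of the $\mathrm{tree}_{*}$ functions gives $d_{s_i}(E)=E'_i$, and Equation~\eqref{eq:delta} computes $\Delta(t)=\delta(E'_1,\ldots,E'_m,f)=\{d_{f(s_1,\ldots,s_m)}(E)\}$. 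Theorem~\ref{thm lang deriv} then yields $L(\Delta(t))=(f(s_1,\ldots,s_m))^{-1}(L(E))$, whereas the target is $(f(t_1,\ldots,t_m))^{-1}(L(E))$.

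The step I expect to be the main obstacle is closing this gap between the $s_i$ and the $t_i$, because the condition $d_{s_i}(E)=E'_i$ only records a quotient of $E$ by $s_i$, not by the intermediate expressions arising in an iterated derivation. Combining $d_{s_i}(E)=E'_i$ with the induction hypothesis and Theorem~\ref{thm lang deriv} gives $s_i^{-1}(L(E))=L(E'_i)=t_i^{-1}(L(E))$ for every $i$. To exploit this I would unfold Definition~\ref{def quot lang} and Equation~\eqref{eq def quot tree} in the case where $E$ satisfies $\mathrm{Ind}_\varepsilon(E)=\emptyset$ (so that $L(E)$ is ground and every quotient by a nullary tree is $1$-homogeneous): for a nullary tree $s$, the set $s^{-1}(L(E))$ is exactly the set of one-hole contexts $C$, i.e.\ trees carrying the single index $\varepsilon_1$, such that $C\circ(s)\in L(E)$. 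Hence $s_i^{-1}(L(E))=t_i^{-1}(L(E))$ says precisely that $s_i$ and $t_i$ are interchangeable inside every context. I would then check that this relation is a congruence for the tree constructors: given a context $C$, applying the hypothesis to the one-hole context $C\circ(f(\varepsilon_1,t_2,\ldots,t_m))$ replaces $t_1$ by $s_1$ without affecting membership in $L(E)$, and iterating over the $m$ arguments yields $C\circ(f(t_1,\ldots,t_m))\in L(E)\Leftrightarrow C\circ(f(s_1,\ldots,s_m))\in L(E)$ for all $C$. This is exactly $(f(s_1,\ldots,s_m))^{-1}(L(E))=(f(t_1,\ldots,t_m))^{-1}(L(E))=t^{-1}(L(E))$, which completes the induction.
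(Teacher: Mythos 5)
Your proof is correct, and its skeleton coincides with the paper's: induction on the input tree (the paper phrases it as a recurrence on height), the induction hypothesis applied to the subtrees, and then the identification of the real obstacle, namely that the transition of Equation~\eqref{eq:delta} is built from the representative trees \(s_i=\mathrm{tree}_{n-1}(E'_i)\) rather than from the subtrees \(t_i\) actually read, so that one must pass from \(f(s_1,\ldots,s_m)^{-1}(L(E))\) to \(f(t_1,\ldots,t_m)^{-1}(L(E))\) knowing only \(s_i^{-1}(L(E))=t_i^{-1}(L(E))\). Where you genuinely diverge is in how this gap is closed. The paper stays at the level of iterated quotients: it writes \(d_t(E)=d_f(d_{t'_1}(\cdots d_{t'_k}(E)\cdots))\) via Definition~\ref{def deriv form tree}, proves Lemma~\ref{lem permut deriv tuple} (iterated quotients can be permuted at the cost of a composition with permuted \(\varepsilon\)'s), derives from it an interchange lemma (a tree occurring in an iterated quotient can be replaced by any tree with the same quotient of \(L\), by permuting it to the innermost position and back), and concludes with Corollary~\ref{cor toute permut quot tuple} and Theorem~\ref{thm lang deriv}. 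You instead prove directly that ``having the same quotient w.r.t.\ \(L(E)\)'' is a congruence for the tree constructors, by unfolding Equation~\eqref{eq def quot tree}: for ground trees and a ground \(L(E)\), \(s^{-1}(L(E))\) is exactly the set of one-hole contexts \(C\) with \(C\circ(s)\in L(E)\), and replacing the arguments of \(f\) one at a time inside the context \(C\circ(f(s_1,\ldots,s_{i-1},\varepsilon_1,t_{i+1},\ldots,t_m))\) gives the desired equality. This is a Myhill--Nerode-style argument; it is more elementary and self-contained (no permutation machinery, no appeal to Equation~\eqref{eq quot lang by tree} beyond what is already packaged in Theorem~\ref{thm lang deriv}), whereas the paper's route produces reusable algebraic lemmas about iterated Bottom-Up quotients and matches the syntactic shape of Definition~\ref{def deriv form tree}. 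A further small simplification on your side: you obtain \(s_i^{-1}(L(E))=L(E'_i)\) directly from the defining property \(d_{s_i}(E)=E'_i\) of the \(\mathrm{tree}_{*}\) functions together with Theorem~\ref{thm lang deriv}, while the paper routes this through the auxiliary claim \(\Delta_{n-1}(\mathrm{tree}_{n-1}(S))=\{S\}\) and a second use of the induction hypothesis; both are sound, but yours avoids the height bookkeeping on the representative trees.
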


In order to prove this result, let us first show how derivatives behave w.r.t. permutations.
\begin{lemma}\label{lem permut deriv tuple}
  Let \(t_1,\ldots,t_k\) be \(k\) nullary trees.
  Let \(L\) be a nullary language.
  Let \( \pi \) be a permutation over \( \{1,\ldots,k\} \).
  Then
  \begin{equation*}
    t_1^{-1}(t_2^{-1}(\cdots {t_k}^{-1}(L) \cdots )) = t_{\pi(1)}^{-1}(t_{\pi(2)}^{-1}(\cdots {t_{\pi(k)}}^{-1}(L) \cdots )) \circ (\varepsilon_{\pi(1)}, \ldots, \varepsilon_{\pi(k)}).
  \end{equation*}
\end{lemma}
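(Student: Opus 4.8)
The plan is to exploit the transparent effect of quotienting a nullary language by a nullary tree (one without $\varepsilon$-indices). Since every $t_i$ is nullary, the defining equation~\eqref{eq def quot tree} shows that one quotient $t_i^{-1}(\cdot)$ replaces a single occurrence of $t_i$ as a subtree by a fresh hole $\varepsilon_1$ and increments every hole already present. Iterating, $t_1^{-1}(t_2^{-1}(\cdots t_k^{-1}(L)\cdots))$ is exactly the set of trees obtained from some $w\in L$ by selecting $k$ pairwise incomparable occurrences (none below another), one of each $t_i$, and replacing the occurrence of $t_i$ by $\varepsilon_i$; indeed the hole created at the $t_i$-step is incremented exactly $i-1$ times before the process stops. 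The right-hand side before composition ranges over the very same choices of occurrences, but fills the occurrence of $t_{\pi(j)}$ with $\varepsilon_j$, that is, the occurrence of $t_i$ with $\varepsilon_{\pi^{-1}(i)}$; composing with $(\varepsilon_{\pi(1)},\ldots,\varepsilon_{\pi(k)})$ turns each label $\varepsilon_j$ into $\varepsilon_{\pi(j)}$, hence the label $\varepsilon_{\pi^{-1}(i)}$ of the $t_i$-occurrence into $\varepsilon_i$, matching the left-hand side tree for tree. This occurrence picture is the reason the identity holds.

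To make it rigorous I would write $[\rho]=t_{\rho(1)}^{-1}(\cdots t_{\rho(k)}^{-1}(L)\cdots)$ for a permutation $\rho$ and $\sigma_\rho=(\varepsilon_{\rho(1)},\ldots,\varepsilon_{\rho(k)})$ for the associated relabelling tuple, so the claim reads $[\mathrm{id}]=[\pi]\circ\sigma_\pi$. Two index lemmas, both read off~\eqref{eq def quot tree} and the definition of $\circ$, carry the argument. First, a cocycle identity $(M\circ\sigma_\rho)\circ\sigma_{\rho'}=M\circ\sigma_{\rho'\rho}$, expressing that composing two relabellings relabels by the composed permutation; in particular $\sigma_{\mathrm{id}}$ acts as the identity and $\,\cdot\circ\sigma_\tau$ is an involution for a transposition $\tau$. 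Second, the single adjacent swap: for nullary $s,s'$ and a homogeneous $N$ with $\mathrm{Ind}_\varepsilon(N)=\{1,\ldots,m\}$,
\begin{equation*}
  s^{-1}(s'^{-1}(N)) = s'^{-1}(s^{-1}(N)) \circ (\varepsilon_2,\varepsilon_1,\varepsilon_3,\ldots,\varepsilon_{m+2}),
\end{equation*}
since both sides select two incomparable occurrences of $s$ and $s'$ in the same $w\in N$ and assign the holes $\varepsilon_1,\varepsilon_2$ in the two opposite orders, the composition swapping them back. Pushing this swap outwards past the remaining quotients uses the commutation rule $s^{-1}(M\circ\sigma_\rho)=s^{-1}(M)\circ\sigma_{1\oplus\rho}$, where $1\oplus\rho$ fixes the fresh index $1$ and applies $\rho$, shifted by one, to the others; iterating it converts the base swap $(1\,2)$ into $(i\,i+1)$ and yields, for every $\rho$, the transposition step $[\rho]=[\rho\tau_i]\circ\sigma_{\tau_i}$ with $\tau_i=(i\,i+1)$.

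The proof then closes by induction on the length of $\pi$ (its minimal number of adjacent transpositions). The case $\pi=\mathrm{id}$ is immediate. Otherwise pick $\tau_i$ with $\pi'=\pi\tau_i$ shorter; the transposition step applied to $\pi'$ gives $[\pi']=[\pi]\circ\sigma_{\tau_i}$, hence $[\pi]=[\pi']\circ\sigma_{\tau_i}$ by involutivity, and then
\[
  [\pi]\circ\sigma_\pi=[\pi']\circ\sigma_{\tau_i}\circ\sigma_\pi=[\pi']\circ\sigma_{\pi\tau_i}=[\pi']\circ\sigma_{\pi'}=[\mathrm{id}]
\]
by the cocycle identity and the induction hypothesis for $\pi'$. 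The conceptual content being elementary, I expect the only real obstacle to be the index bookkeeping: verifying that the fresh-index/increment mechanics of~\eqref{eq def quot tree} interact with the relabelling tuples exactly as the $1\oplus\rho$ rule and the cocycle identity predict, and that the incomparability of the chosen occurrences is precisely what makes the two orders of quotienting enumerate the same family of trees, so that no spurious terms appear when $s$ is a subtree of $s'$ or when some of the $t_i$ coincide.
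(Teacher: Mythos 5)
Your proof is correct, but it takes a genuinely different route from the paper's. The paper's argument is a short chain of element-wise equivalences: by Equation~\eqref{eq def quot tree}, a tree $t$ belongs to $t_1^{-1}(t_2^{-1}(\cdots t_k^{-1}(L)\cdots))$ if and only if some $t'\in L$ satisfies $t'=t\circ(t_1,\ldots,t_k)$; factoring the substitution tuple as $t\circ(\varepsilon_{\pi^{-1}(1)},\ldots,\varepsilon_{\pi^{-1}(k)})\circ(t_{\pi(1)},\ldots,t_{\pi(k)})$ then shows that $t\circ(\varepsilon_{\pi^{-1}(1)},\ldots,\varepsilon_{\pi^{-1}(k)})$ lies in the $\pi$-permuted iterated quotient, i.e.\ that $t$ lies in its relabelling by $(\varepsilon_{\pi(1)},\ldots,\varepsilon_{\pi(k)})$ --- an arbitrary $\pi$ is handled in one stroke. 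Your opening ``occurrence picture'' is precisely the semantic content of that characterization, but your formal argument goes elsewhere: you decompose $\pi$ into adjacent transpositions and induct on its length, supported by the cocycle identity, the adjacent-swap lemma, and the quotient/relabelling commutation rule $s^{-1}(M\circ\sigma_\rho)=s^{-1}(M)\circ\sigma_{1\oplus\rho}$; I checked that all three auxiliary identities are correctly stated and that the induction step ($[\pi]\circ\sigma_\pi=[\pi']\circ\sigma_{\pi\tau_i}=[\pi']\circ\sigma_{\pi'}=[\mathrm{id}]$) goes through. What your route buys is that the delicate index bookkeeping is confined to the two-quotient swap, the easiest case to verify against Equation~\eqref{eq def quot tree}, with the rest being routine group-theoretic scaffolding; what it costs is exactly that scaffolding, replacing the paper's single associativity observation on $\circ$ by three lemmas and an induction. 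Note finally that both proofs rest on the same unfolding of Equation~\eqref{eq def quot tree} for iterated nullary quotients --- the paper invokes it for $k$-tuples at once, you only for pairs --- so neither is more self-contained than the other in that respect.
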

\begin{proof}
  Let \(t\) be a tree.
  Then, considering Equation~\ref{eq def quot tree},
  \begin{align*}
     & \qquad t \in t_1^{-1}(t_2^{-1}(\cdots {t_k}^{-1}(L) \cdots ))                                                                                                    \\
     & \Leftrightarrow \exists t' \in L, t' = t \circ (t_1,\ldots, t_k)                                                                                                 \\
     & \Leftrightarrow \exists t' \in L, t' = t \circ (\varepsilon_{\pi^{-1}(1)}, \ldots, \varepsilon_{\pi^{-1}(k)}) \circ (t_{\pi(1)},\ldots, t_{\pi(k)})              \\
     & \Leftrightarrow t \circ (\varepsilon_{\pi^{-1}(1)}, \ldots, \varepsilon_{\pi^{-1}(k)}) \in t_{\pi(1)}^{-1}(t_{\pi(2)}^{-1}(\cdots {t_{\pi(k)}}^{-1}(L) \cdots )) \\
     & \Leftrightarrow t \in t_{\pi(1)}^{-1}(t_{\pi(2)}^{-1}(\cdots {t_{\pi(k)}}^{-1}(L) \cdots )) \circ (\varepsilon_{\pi(1)}, \ldots, \varepsilon_{\pi(k)})
  \end{align*}
  %\qed%
\end{proof}

Let us now show that if two trees act similarly \emph{via} the quotient operation,  then their action can be interchanged.
\begin{lemma}
  Let \(t_1,\ldots,t_k\) be \(k\) nullary trees.
  Let \(L\) be a nullary language.
  Let \(1 \leq j \leq k\) be an integer and let \(t'_j\) be a nullary tree such that \({t_j}^{-1}(L) = {t'_j}^{-1}(L) \).
  Then
  \begin{equation*}
    t_1^{-1}(\cdots t_j^{-1}(\cdots {t_k}^{-1}(L) \cdots ) \cdots )\\
    = t_1^{-1}(\cdots {t'_j}^{-1}(\cdots {t_k}^{-1}(L) \cdots ) \cdots ).
  \end{equation*}
\end{lemma}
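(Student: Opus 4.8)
The plan is to reduce the statement to the permutation lemma (Lemma~\ref{lem permut deriv tuple}). The obstacle is that the hypothesis \({t_j}^{-1}(L) = {t'_j}^{-1}(L)\) concerns the action of \(t_j\) and \(t'_j\) \emph{directly} on \(L\), whereas in the chain \(t_j\) is applied not to \(L\) but to the intermediate language \({t_{j+1}}^{-1}(\cdots {t_k}^{-1}(L)\cdots)\). So the first move is to reorder the chain so that \(t_j\) becomes the innermost quotient, acting directly on \(L\); there the hypothesis can be invoked, and the permutation bookkeeping will match on both sides.

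Concretely, I would fix a permutation \(\pi\) over \(\{1,\ldots,k\}\) with \(\pi(k) = j\), and introduce the primed family \(s_i = t_i\) for \(i \neq j\) and \(s_j = t'_j\); since \(t'_j\) is nullary, the family \(s_1,\ldots,s_k\) meets the hypotheses of Lemma~\ref{lem permut deriv tuple}. Applying that lemma with the \emph{same} \(\pi\) to both the unprimed and the primed chain gives
\begin{align*}
  t_1^{-1}(\cdots {t_k}^{-1}(L)\cdots)
  &= t_{\pi(1)}^{-1}(\cdots t_{\pi(k-1)}^{-1}({t_j}^{-1}(L))\cdots) \circ (\varepsilon_{\pi(1)},\ldots,\varepsilon_{\pi(k)}),\\
  s_1^{-1}(\cdots {s_k}^{-1}(L)\cdots)
  &= s_{\pi(1)}^{-1}(\cdots s_{\pi(k-1)}^{-1}({t'_j}^{-1}(L))\cdots) \circ (\varepsilon_{\pi(1)},\ldots,\varepsilon_{\pi(k)}),
\end{align*}
where on both right-hand sides the choice \(\pi(k) = j\) has placed the \(j\)-indexed factor innermost.

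Then I would compare the two right-hand sides factor by factor. Because \(\pi\) is a bijection with \(\pi(k) = j\), we have \(\pi(i) \neq j\) for every \(i < k\), so \(s_{\pi(i)} = t_{\pi(i)}\) for \(1 \leq i \leq k-1\): the outer quotient symbols and the trailing reindexing tuple \((\varepsilon_{\pi(1)},\ldots,\varepsilon_{\pi(k)})\) are literally identical in the two expressions. Only the innermost factors differ, namely \({t_j}^{-1}(L)\) versus \({t'_j}^{-1}(L)\), and these are equal by hypothesis. Since the quotient operation is a genuine function of its language argument (Definition~\ref{def quot lang}), applying the same sequence of outer quotients and the same composition to equal languages yields equal results; hence the two right-hand sides coincide, and therefore so do the original chains.

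The main point to get right is the coordination of the single permutation across both chains: one must check that \(\pi(k)=j\) really isolates the differing factor innermost while leaving every other factor and the reindexing tuple untouched, so that the only place where the two computations can diverge is exactly where the hypothesis applies. No separate homogeneity check is needed, since Lemma~\ref{lem permut deriv tuple} is stated for the whole chain of quotients of a nullary language, and the intermediate climbing of arities is already absorbed into its statement.
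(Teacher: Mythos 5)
Your proof is correct and is essentially the paper's own argument: the paper likewise invokes Lemma~\ref{lem permut deriv tuple} with the transposition exchanging \(j\) and \(k\) to bring the \(j\)-th factor innermost, applies the hypothesis \({t_j}^{-1}(L) = {t'_j}^{-1}(L)\) there, and permutes back. The only difference is presentational: the paper chains the equalities through one transposition applied forwards and then backwards, while you apply the permutation lemma once to each side and compare the two right-hand sides; the content is identical.
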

\begin{proof}
  Let us consider the permutation \(\pi'\) that only permutes \(j\) with \(k\) and acts like the identity for the other integers.
  From Lemma~\ref{lem permut deriv tuple}, one can check the following equivalences:
  \begin{align*}
     & t_1^{-1}(\cdots t_j^{-1}(\cdots {t_k}^{-1}(L) \cdots ) \cdots )                                                                                                                                            \\
     & = {t_{\pi'(1)}}^{-1}(\cdots {t_{\pi'(j)}}^{-1}(\cdots {{t_{\pi'(k)}}}^{-1}(L) \cdots ) \cdots ) \circ (\varepsilon_{\pi'(1)}, \ldots,\varepsilon_{\pi'(j)},\ldots, \varepsilon_{\pi'(k)})                  \\
     & = {t_{1}}^{-1}(\cdots {t_{k}}^{-1}(\cdots {{t_{j}}}^{-1}(L) \cdots ) \cdots ) \circ (\varepsilon_{1}, \ldots,\varepsilon_{j-1},\varepsilon_k,\varepsilon_{j+1},\ldots,\varepsilon_{k-1}, \varepsilon_{j})  \\
     & = {t_{1}}^{-1}(\cdots {t_{k}}^{-1}(\cdots {{t'_{j}}}^{-1}(L) \cdots ) \cdots ) \circ (\varepsilon_{1}, \ldots,\varepsilon_{j-1},\varepsilon_k,\varepsilon_{j+1},\ldots,\varepsilon_{k-1}, \varepsilon_{j}) \\
     & = {t_{\pi'(1)}}^{-1}(\cdots {t_{\pi'(j)}}^{-1}(\cdots {{t'_{\pi'(k)}}}^{-1}(L) \cdots ) \cdots ) \circ (\varepsilon_{\pi'(1)}, \ldots,\varepsilon_{\pi'(j)},\ldots, \varepsilon_{\pi'(k)})                 \\
     & = t_1^{-1}(\cdots {t'_j}^{-1}(\cdots {t_k}^{-1}(L) \cdots ) \cdots )
  \end{align*}
  %\qed%
\end{proof}

The repeated application of this lemma trivially leads to the following corollary.
\begin{corollary}\label{cor toute permut quot tuple}
  Let \(t_1,\ldots,t_k\) be \(k\) nullary trees.
  Let \(L\) be a nullary language.
  Let \(t'_1,\ldots,t'_k\) be \(k\) nullary trees such that for any integer \(1\leq j\leq k\),  \({t_j}^{-1}(L) = {t'_j}^{-1}(L) \).
  Then
  \begin{equation*}
    t_1^{-1}(\cdots {t_k}^{-1}(L) \cdots ) = {t'_1}^{-1}(\cdots {t'_k}^{-1}(L) \cdots )
  \end{equation*}
\end{corollary}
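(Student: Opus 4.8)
The plan is to obtain Corollary~\ref{cor toute permut quot tuple} as an immediate consequence of the preceding lemma by an induction on the number of positions at which the tuples $(t_1,\ldots,t_k)$ and $(t'_1,\ldots,t'_k)$ differ. The hypothesis is uniform: for every $1\leq j\leq k$ we have ${t_j}^{-1}(L) = {t'_j}^{-1}(L)$. Since the lemma lets us replace a single $t_j$ by $t'_j$ inside the nested quotient $t_1^{-1}(\cdots {t_k}^{-1}(L)\cdots)$ without changing the result, the idea is simply to apply that replacement once for each index, one after the other, and conclude that after $k$ replacements the whole tuple has been swapped out.

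First I would set up the induction by defining the hybrid expressions
\begin{equation*}
  H_i = t_1^{-1}(\cdots {t_i}^{-1}({t'_{i+1}}^{-1}(\cdots {t'_k}^{-1}(L)\cdots))\cdots)
\end{equation*}
for $0\leq i\leq k$, so that $H_k = t_1^{-1}(\cdots {t_k}^{-1}(L)\cdots)$ is the left-hand side of the corollary and $H_0 = {t'_1}^{-1}(\cdots {t'_k}^{-1}(L)\cdots)$ is the right-hand side. The claim then reduces to showing $H_0 = H_1 = \cdots = H_k$. For the step from $H_i$ to $H_{i-1}$, note that the two expressions are identical except that $H_i$ has ${t_i}^{-1}$ at the $i$-th slot whereas $H_{i-1}$ has ${t'_i}^{-1}$, all outer quotients ${t_1}^{-1},\ldots,{t_{i-1}}^{-1}$ and all inner quotients ${t'_{i+1}}^{-1},\ldots,{t'_k}^{-1}$ being the same. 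Applying the lemma with the inner nullary language taken to be $L'' = {t'_{i+1}}^{-1}(\cdots {t'_k}^{-1}(L)\cdots)$ — which is again a nullary language, so the hypotheses of the lemma are met — and using the hypothesis ${t_i}^{-1}(L) = {t'_i}^{-1}(L)$ yields $H_i = H_{i-1}$.

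The one point requiring care is that the lemma was stated for the hypothesis ${t_j}^{-1}(L) = {t'_j}^{-1}(L)$ relative to the \emph{outermost} language, whereas in the step above the relevant inner language is $L''$, not $L$. I would check that the hypothesis transports correctly: what the lemma actually needs is that $t_i$ and $t'_i$ induce the same quotient of whatever language sits directly beneath them in the nesting. Since for nullary trees the quotient is a reindexing-and-matching operation (Equation~\eqref{eq def quot tree}) that commutes with the further quotients $t_1^{-1},\ldots,t_{i-1}^{-1}$ applied on the outside, the equality ${t_i}^{-1}(L) = {t'_i}^{-1}(L)$ on the base language propagates to the equality of $H_i$ and $H_{i-1}$ exactly as the lemma asserts. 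This is the only genuine obstacle, and it is precisely the content already packaged in the lemma; the corollary itself is then a routine finite iteration, which is why the paper states it follows ``trivially.''
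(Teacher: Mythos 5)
Your interpolation strategy --- defining the hybrid chains $H_i$ and swapping one tree at a time --- is exactly what the paper means by ``repeated application of this lemma,'' but the way you invoke the lemma at each step is faulty, and the patch you offer does not close the resulting gap. You apply the lemma to the truncated chain $(t_1,\ldots,t_i)$ with base language $L''={t'_{i+1}}^{-1}(\cdots {t'_k}^{-1}(L)\cdots)$. First, $L''$ is \emph{not} nullary: by Equation~\eqref{eq def quot tree} the quotient of a $k'$-ary tree by a $k$-ary tree is $(k'-k+1)$-homogeneous, so each quotient by a nullary tree raises the arity by one and $L''$ is $(k-i)$-homogeneous; the lemma, whose hypotheses require a nullary base language, therefore does not even apply with $L''$ in that role (your step is sound only for the very first replacement $i=k$, where the inner chain is empty and $L''=L$). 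Second, even granting the application, the hypothesis it would need is ${t_i}^{-1}(L'')={t'_i}^{-1}(L'')$, which is not what the corollary assumes; your ``transport'' argument --- that the quotient ``commutes with the further quotients'' so the equality on $L$ propagates to $L''$ --- asserts precisely the nontrivial content rather than proving it. Quotients by nullary trees do not commute literally; they commute only up to composition with a permutation of the $\varepsilon$'s (Lemma~\ref{lem permut deriv tuple}), and making your transport claim rigorous would amount to redoing the paper's proof of the replacement lemma itself.

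The gap is easy to close, because the lemma is already stated in exactly the form you need: it concerns an arbitrary chain of nullary trees over the nullary base language $L$, allows the replaced tree to sit at \emph{any} position $j$ of the chain, and its hypothesis ${t_j}^{-1}(L)={t'_j}^{-1}(L)$ is relative to that base $L$, not to the language sitting directly beneath position $j$. So for the step $H_i=H_{i-1}$ you should apply the lemma to the full hybrid chain $(t_1,\ldots,t_{i-1},t_i,t'_{i+1},\ldots,t'_k)$ with base language $L$ and position $j=i$: the required hypothesis is then ${t_i}^{-1}(L)={t'_i}^{-1}(L)$, which is exactly what is assumed, and the conclusion is exactly $H_i=H_{i-1}$. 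With that correction your induction goes through verbatim and coincides with the paper's intended proof.
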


Finally, we can prove our proposition.

\begin{proof}  (of Proposition~\ref{prop Delta t dt})
  It is sufficient, by definition of the Bottom-Up derivative tree automaton, to prove by recurrence over an integer~\(n\) that for any tree \(t\) of height at most \(n + 1\), it holds that for \( \Delta_n(t) = \{E'\} \),
  \begin{equation*}
    L(E') = t^{-1}(L(E)).
  \end{equation*}
  \begin{enumerate}
    \item
          By definition of \(\delta_0\), the equality holds for any tree of height \(1\).
    \item
          Let us consider that the hypothesis holds at the rank \(n - 1\).
          Let \(f(t_1,\ldots,t_k)\) be a tree of height \((n + 1)\).
          Obviously, for an integer \( 1 \leq j \leq k\), the height of the tree \(t_j\) is smaller than \(n\) and therefore, by application of  the induction hypothesis, assuming that \(\Delta_{n-1}(t_j) = \{E_j\} \),
          \begin{equation*}
            L(E_j) = {t_j}^{-1}(L(E)).
          \end{equation*}
          Moreover, by definition, \(\delta_{n}\) contains the transition \( ((E_1, \ldots, E_k), f, d_t(E)) \) where \( t = f(\mathrm{tree}_{n-1}(E_1),\ldots, \mathrm{tree}_{n-1}(E_k)) \).

          Let us set, for any integer \(1 \leq j \leq k\), \(\mathrm{tree}_{n-1}(E_j) = t'_j\), and therefore \( t = f(t'_1,\ldots, t'_k) \).

          It is easy to check, from the construction of the function \(\mathrm{tree}_{n-1}\) and by a trivial recurrence, that for any state \(S\) in \(A_{n-1}\),
          \begin{equation*}
            \Delta_{n-1}(\mathrm{tree}_{n-1}(S)) = \{S\}.
          \end{equation*}

          Hence,  for any integer \( 1 \leq j \leq k\),
          \begin{equation*}
            \Delta_{n-1}(t'_j) = \{E_j\}
          \end{equation*}
          and by  induction hypothesis, since the height of \(t'_j\) is at most \(n\), it holds
          \begin{equation*}
            L(d_{t'_j}(E)) = L(E_j).
          \end{equation*}
          Consequently,
          \begin{equation*}
            {t_j}^{-1}(L(E)) =  {t'_j}^{-1}(L(E)).
          \end{equation*}

          Notice that from Definition~\ref{def deriv form tree},
          \begin{equation*}
            d_t(E) =
            d_f(
            d_{t'_1}(
            \cdots
            d_{t'_k}(E)
            \cdots
            )
            ).
          \end{equation*}
          Hence, from Corollary~\ref{cor toute permut quot tuple} and  Theorem~\ref{thm lang deriv},
          \begin{align*}
            L(d_t(E)) & = L (
            d_f(
            d_{t'_1}(
            \cdots
            d_{t'_k}(E)
            \cdots
            )
            ))                                                                 \\
                      & = f^{-1}({t'_1}^{-1}(\cdots {t'_k}^{-1}(L(E)) \cdots)) \\
                      & = f^{-1}({t_1}^{-1}(\cdots {t_k}^{-1}(L(E)) \cdots))   \\
                      & = L(d_{f(t_1,\ldots,t_k)}(E))
          \end{align*}
  \end{enumerate}
  %\qed%
\end{proof}

As a direct consequence of Proposition~\ref{prop Delta t dt}, Proposition~\ref{prop eq membership t eps} and Equation~\eqref{eq:final states}, the following theorem holds.
\begin{theorem}\label{thm bu deriv tree reco bon lang}
  A Bottom-Up derivative tree automaton of a tree expression \(E\) is deterministic and recognizes \(L(E)\).
\end{theorem}
\begin{proof}
  Let \(A = (\Sigma, \_, F, \delta)\) be the Bottom-Up derivative tree automaton of \(E\).
  By construction, \(A\) is deterministic.
  Let \(t\) be a tree over \(\Sigma \).
  Then:
  \begin{align*}
    t \in L(E) & \Leftrightarrow \varepsilon_1 \in t^{-1}(L(E))                           \\
               & \Leftrightarrow \varepsilon_1 \in L(E') \text{ with } \Delta(t) = \{E'\} \\
               & \Leftrightarrow E' \in F \text{ with } \Delta(t) = \{E'\}                \\
               & \Leftrightarrow t \in L(A).
  \end{align*}
  %\qed%
\end{proof}

Notice that when \(\Sigma_n=\emptyset \) for \(n\geq 2\), the three ACI rules (associativity, commutativity and idempotence of the sum) are sufficient to obtain a finite tree automaton, isomorphic to the classical  Brzozowski automaton (for words).
More generally, one can wonder if these three rules are sufficient in order to obtain a finite set of (similar) derivatives.
This (technical) study is the next step of our study.

\begin{example}\label{ex:cons}
  Let us consider the tree expressions defined in Example~\ref{ex:derivative}, \emph{i.e.}
  \begin{align*}
    E   & = E_1 \cdot_a E_2, & E_1 & = \neg({g(a)}^{*_a}),                                &
    E_2 & = f(f(a,a),a),     & E'  & = \neg({g(\varepsilon_1)}^{\circledast})\cdot_a E_2.
  \end{align*}
  Let us show how to compute a derivative tree automaton of \(E\).
  First, let us compute \( A_0 = (\Sigma, D_0(E), F_0, \delta_0) \):
  by definition,
  \(
  D_0(E) = \{d_a(E), d_b(E), d_c(E)\}.
  \)
  Hence,
  \begin{gather*}
    \begin{aligned}
      d_a(E) & = E' \circ d_a(E_2),                            &
      d_b(E) & = d_c(E) = \neg(\emptyset_{\{1\}}) \cdot_a E_2,
    \end{aligned}\\
    \begin{aligned}
      D_0(E)   & = \{ d_a(E), d_b(E)\},                       &
      F_0      & =\{d_b(E)\},                                 &
      \delta_0 & = \{(a, d_a(E)), (b, d_b(E)), (c, d_b(E))\}.
    \end{aligned}
  \end{gather*}
  The tree automaton \(A_0\) is represented in Figure~\ref{fig a0}, where the state \(\emptyset_1\), a sink-state, and its transitions are omitted.
  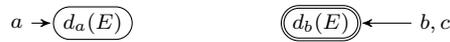
\begin{figure}[H]
    \centering
    \begin{tikzpicture}[node distance=2.5cm,bend angle=30,transform shape,scale=1]
      \node[state, rounded rectangle] (1)  {\( d_a(E) \)} ;
      \node[state, rounded rectangle,double, right of = 1,node distance=3cm] (2) {\( d_b(E) \)};
      %transitions 0
      \draw (1) ++(-1cm,0cm) node {\( a \)}  edge[->] (1);
      \draw (2) ++(1.5cm,0cm) node {\( b,c \)}  edge[->] (2);
    \end{tikzpicture}
    \caption{The Tree Automaton \(A_0\).}%
    \label{fig a0}
  \end{figure}
  We choose a function to define \(\mathrm{tree}_0\):
  \(
  \mathrm{tree}_0(d_a(E)) = a,  \mathrm{tree}_0(d_b(E)) = b.
  \)
  Let us now show how to compute \( A_1=(\Sigma, D_1(E), F_1, \delta_1) \).
  According to Equation~(\ref{eq:delta}), it is sufficient to compute the derivatives of \(E\) w.r.t.\ the trees in the set \( \{f(a,a), f(a,b), f(b,a), f(b,b), g(a), g(b)\} \):
  \begin{gather*}
    \begin{aligned}
      d_{f(a,a)}(E) & = E' \circ f(\varepsilon_1, a), &
      d_{f(b,b)}(E) & = d_{g(b)}(E) = d_{b}(E),
    \end{aligned}\\
    d_{f(a,b)}(E) = d_{f(b,a)}(E) = d_{g(a)}(E) = \emptyset_{\{1\}}.
  \end{gather*}
  There is a new non-final state, \(E' \circ f(\varepsilon_1, a)\), which is associated with \(f(a,a)\) by the function \(\mathrm{tree}_1\), and three new transitions:
  \begin{equation*}
    \delta_1 = \delta_0 \cup \{(d_a(E), d_a(E), f, d_{f(a,a)}(E)), (d_b(E), d_b(E), f, d_{b}(E)), (d_b(E), g , d_b(E))\}.
  \end{equation*}
  The tree automaton \(A_1\) is represented in Figure~\ref{fig a1}.
  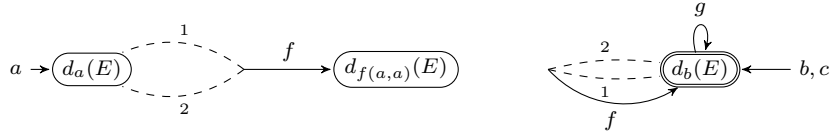
\begin{figure}[H]
    \centering
    \begin{tikzpicture}[node distance=2.5cm,bend angle=30,transform shape,scale=1]
      \node[state, rounded rectangle] (1)  {\( d_a(E) \)} ;
      \node[state, rounded rectangle, right of=1, node distance=4cm] (3) {\( d_{f(a,a)}(E) \)};
      \node[state, rounded rectangle,double, right of=3,node distance=4cm] (2) {\( d_b(E) \)};

      %transitions 0
      \draw (1) ++(-1cm,0cm) node {\( a \)}  edge[->] (1);
      \draw (2) ++(1.5cm,0cm) node {\( b,c \)}  edge[->] (2);
      %transitions 1
      \path[->]
      (2) edge[->, loop above] node {\( g \)} ();
      %transition 2
      %33
      \draw (3) ++(-2 cm,0cm)  edge[->]   node[above,pos=0.5] {\( f \)} (3) edge[dashed, bend left]node[pos=0.5,below]{\tiny{2}} (1) edge[dashed, bend right] node[pos=0.5,above]{\tiny{1}}(1);
      \draw (2) ++(-2cm,0cm) edge[->,bend left=-40]   node[below,pos=0.5] {\( f \)} (2) edge[dashed, bend left=-10]node[pos=0.5,below]{\tiny{1}}(2) edge[dashed, bend left=10] node[pos=0.5,above]{\tiny{2}}(2);
    \end{tikzpicture}
    \caption{The Tree Automaton \(A_1\).}%
    \label{fig a1}
  \end{figure}
  The tree automaton \(A_2\) can be computed through the derivatives w.r.t.\ the trees in the set
  \begin{equation*}
    \{f(f(a,a),f(a,a)), f(f(a,a),a),  f(f(a,a),b), f(a,f(a,a)), f(b,f(a,a)), g(f(a,a))\}:
  \end{equation*}
  \(d_{f(f(a,a),b)}(E) = d_{f(a,f(a,a))}(E) = d_{f(b,f(a,a))}(E) = d_{f(b,f(a,a))}(E) = \emptyset_{\{1\}} \),\\
  \(
  d_{f(f(a,a),a)}(E) = E'\).
  There is a new non-final state, \(E'\), which is associated with \(t = f(f(a,a),a)\) by the function \(\mathrm{tree}_2\), and one new transition:
  \(\delta_2 = \delta_1 \cup \{(d_{f(a,a)}(E), d_a(E), f, d_{t}(E))\} \).
  The tree automaton \(A_2\) is represented in Figure~\ref{fig a2}.
  \begin{figure}[H]
    \centering
    \begin{tikzpicture}[node distance=2.5cm,bend angle=30,transform shape,scale=1]
      \node[state, rounded rectangle] (1)  {\( d_a(E) \)} ;
      \node[state, rounded rectangle, right of=1, node distance=4cm] (3) {\( d_{f(a,a)}(E) \)};
      \node[state, rounded rectangle, right of=3,node distance = 3cm] (4) {\( d_{f(f(a,a),a)}(E) \)};
      \node[state, rounded rectangle,double, below right of=4,node distance=2cm] (2) {\( d_b(E) \)};

      %transitions 0
      \draw (1) ++(-1cm,0cm) node {\( a \)}  edge[->] (1);
      \draw (2) ++(1.5cm,0cm) node {\( b,c \)}  edge[->] (2);
      %transitions 1
      \path[->]
      (2) edge[->, loop above] node {\( g \)} ();
      %transition 2
      %33
      \draw (3) ++(-2 cm,0cm)  edge[->]   node[above,pos=0.5] {\( f \)} (3) edge[dashed, bend left]node[pos=0.5,below]{\tiny{2}} (1) edge[dashed, bend right] node[pos=0.5,above]{\tiny{1}}(1);
      \draw (2) ++(-2cm,0cm) edge[->,bend left=-40]   node[below,pos=0.5] {\( f \)} (2) edge[dashed, bend left=-10]node[pos=0.5,below]{\tiny{1}}(2) edge[dashed, bend left=10] node[pos=0.5,above]{\tiny{2}}(2);
      \draw (4) ++(-1 cm, -1cm)  edge[->,bend right]   node[left,pos=0.5] {\( f \)} (4) edge[dashed, bend right=20]node[pos=0.5,above]{\tiny{2}} (3) edge[dashed, bend left=40] node[pos=0.5,below]{\tiny{1}}(1);
    \end{tikzpicture}
    \caption{The Tree Automaton \(A_2\).}%
    \label{fig a2}
  \end{figure}
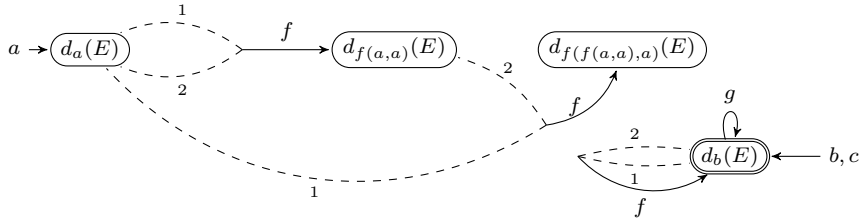
  The tree automaton \(A_3\) is obtained by computing the derivatives w.r.t.\ the trees in the set \( \{f(t, t), f(t, a), f(t, b),\ldots \} \).
  There are only four computations that do not return a derivative equal to \( \emptyset_{\{1\}} \):
  \(d_{f(t,t)}(E) = d_{f(t,b)}(E) = d_{f(b, t)}(E) = d_b(E)\) and
  \(d_{g(t)}(E) = d_t(E)\).
  There are four new transitions:
  \begin{align*}
    \delta_3 & = \delta_2 \cup \{(d_{t}(E), d_t(E), f, d_b(E)), (d_{t}(E), d_b(E), f, d_b(E)), (d_{b}(E), d_t(E), f, d_b(E)), \\
             & \quad (d_{t}(E), g, d_t(E))\},
  \end{align*}
  but these computations does not introduce new states.
  Therefore the computation halts and the derivative tree automaton of \( E \) is \(A_3\), represented in Figure~\ref{fig1}.
\end{example}
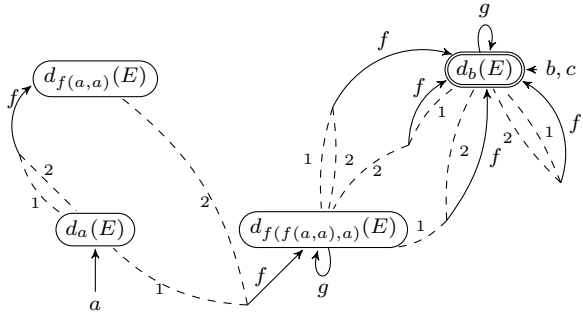
\begin{figure}[H]
  \centering
  \begin{tikzpicture}[node distance=2.5cm,bend angle=30,transform shape,scale=1]
    \node[state, rounded rectangle] (1)  {\( d_a(E) \)} ;
    \node[state, rounded rectangle, above of=1,node distance=2cm] (3) {\( d_{f(a,a)}(E) \)};
    \node[state, rounded rectangle, right of=1,node distance = 3cm] (4) {\( d_{f(f(a,a),a)}(E) \)};
    \node[state, rounded rectangle,double, above right of=4,node distance=3cm] (2) {\( d_b(E) \)};

    %transitions 0
    \draw (1) ++(0cm,-1cm) node {\( a \)}  edge[->] (1);
    \draw (2) ++(1cm,0cm) node {\( b,c \)}  edge[->] (2);
    %transitions 1
    \path[->]
    (2) edge[->, loop above] node {\( g \)} ()
    (4) edge[->, loop below] node {\( g \)} ();
    %transition 2
    %33
    \draw (3) ++(-1 cm,-1cm)  edge[->,bend left=40]   node[above,pos=0.5] {\( f \)} (3) edge[dashed, bend left=-20]node[pos=0.5,below]{\tiny{1}} (1) edge[dashed, bend left=0] node[pos=0.5,above]{\tiny{2}}(1);
    %22
    \draw (4) ++(-1 cm, -1cm)  edge[->,bend left=0]   node[left,pos=0.5] {\( f \)} (4) edge[dashed, bend right=20]node[pos=0.5,below]{\tiny{2}} (3) edge[dashed, bend left=20] node[pos=0.5,left]{\tiny{1}}(1);
    %11
    \draw (2) ++(-2 cm,-0.5cm)  edge[->,bend left=40]   node[above,pos=0.5] {\( f \)} (2) edge[dashed, bend left=10] node[pos=0.5,right]{\tiny{2}}(4) edge[dashed, bend left=-10] node[pos=0.5,left]{\tiny{1}}(4);
    \draw (2) ++(-1cm,-1cm)  edge[->,bend left=30]   node[above,pos=0.5] {\( f \)} (2) edge[dashed, bend left=-20] node[pos=0.5,right]{\tiny{2}}(4) edge[dashed, bend left=15] node[pos=0.5,right]{\tiny{1}}(2);
    \draw (2) ++(-0.5cm,-2cm) edge[->,bend left=-20]   node[right,pos=0.5] {\( f \)} (2) edge[dashed, bend left=20] node[pos=0.5,above]{\tiny{1}}(4) edge[dashed, bend left=15] node[pos=0.5,right]{\tiny{2}}(2);
    \draw (2) ++(1cm,-1.5cm) edge[->,bend left=-40]   node[right,pos=0.5] {\( f \)} (2) edge[dashed, bend left=-10]node[pos=0.5,right]{\tiny{1}}(2) edge[dashed, bend left=10] node[pos=0.5,left]{\tiny{2}}(2);
  \end{tikzpicture}
  \caption{The Bottom-Up Derivative Tree Automaton of \(E\).}%
  \label{fig1}
\end{figure}

\section{Partial Derivatives and Derived Terms}\label{sec: part der}
The partial derivation, due to Antimirov~\cite{Anti96}, is an operation based on the same operation as the derivation.
Indeed, they both implement the computation of the quotient at the expression (and then syntactical) level.
The main difference is that instead of computing an expression from an expression, the partial derivation produces an expression set from an expression.
Computing a sum from this set produces an expression equivalent to the one obtained \emph{via} derivation.
It was already extended to tree expressions by Kuske and Meinecke~\cite{KuskeM11} in order to compute a Top-Down automaton.
Therefore, let us then show how to extend it in a Bottom-Up way.

As in the case of (Bottom-Up) derivation,  let us first explicit the derivation w.r.t. an empty tree. Let \(E\) be a valid tree expression and \(j\) be an \( \varepsilon \)-index of \(E\).
Then \(\partial_{\varepsilon_j}(E)\) is obtained, as in the case of derivation, by incrementing all the \( \varepsilon \)-indices of \(E\) by \(1\) except \(\varepsilon_j\) which is replaced by \(\varepsilon_1\)  \emph{i.e.}
\begin{equation*}\label{def deriv partial eps}
  \partial_{\varepsilon_j}(E)=\{d_{\varepsilon_j}(E) \}.
\end{equation*}
Its application to an expression set can be easily defined as follows:
\begin{equation*}
  \partial_{\varepsilon_j}(\mathcal{E})=\bigcup_{E\in \mathcal{E}} \partial_{\varepsilon_j}(E).
\end{equation*}
The computation of the derived terms, that are the expressions in the set obtained by the partial derivation, can be defined w.r.t.\ the symbols as follows.
Notice that we only deal with the sum as Boolean operator.
Classically, the partial derivation is defined for (so-called) simple expressions; however, it can be easily extended to all Boolean operators~\cite{CCM11}.
\begin{definition}\label{ partial deriv partial symb}
  Let \(\alpha \) be a symbol in \(\Sigma_n\) and \(F\) be a valid tree expression over \(\Sigma \) containing \( \{1,\ldots,n\} \) as \(\varepsilon \)-indices.
  The \emph{partial derivative} of \(F\) w.r.t.\ to \(\alpha \) is the expression inductively defined by
  \begin{align*}
    \partial_\alpha(\emptyset_{\mathcal{I}})                    & =\emptyset,                                                                                                          \\
    \partial_\alpha(\varepsilon_1)                              & = \emptyset,                                                                                                         \\
    \partial_\alpha(\alpha(\varepsilon_1,\ldots,\varepsilon_n)) & = \{\varepsilon_1\},                                                                                                 \\
    \partial_\alpha(f(E_1,\ldots,E_m))                          & = \bigcup_{1\leq j\leq m} \{f(\underline{E_1}, \ldots,\underline{E_{j-1}}, E'_{j},\underline{E_{j+1}},\ldots,
    \underline{E_m})\mid E'_j\in\partial_{\alpha}(E_j)\},                                                                                                                              \\
                                                                & \qquad\cup \{\varepsilon_1 \mid \alpha = f \wedge \forall i\leq m, \varepsilon_i \in L(E_i)\}                        \\
    \partial_\alpha(E_1+E_2)                                    & = \partial_\alpha(E_1)\cup \partial_\alpha(E_2),                                                                     \\
    \partial_\alpha(E_1\cdot_b E_2)                             & =
    \begin{cases}
      (\partial_b(E_1)\cdot_b E_2) \circ_1 \partial_b(E_2)                                          & \text{ if }\alpha=b,                         \\
      \partial_\alpha(E_1)\cdot_b E_2 \cup(\partial_b(E_1)\cdot_b E_2) \circ_1 \partial_\alpha(E_2) & \text{ if }\alpha\in\Sigma_0\setminus \{b\}, \\
      \partial_\alpha(E_1)\cdot_b E_2                                                               & \text{otherwise,}                            \\
    \end{cases}                                                                                                                                                         \\
    \partial_\alpha(E\circ(E_1,\ldots,E_k))                     & =  \bigcup_{1\leq j\leq k}  \{E\circ({(\underline{E_l})}_{1\leq l\leq j},E'_j,{(\underline{E_l})}_{j+1\leq l\leq k}) \\
                                                                & \quad\quad \mid E'_j\in\partial_\alpha(E_j)\}                                                                        \\
                                                                & \quad \cup
    \begin{cases}
      \partial_{
        \alpha({(\varepsilon_{j_{p_l}})}_{1\leq l\leq n})
      }
      (E) \circ(\varepsilon_1,{((\underline{E_l})}_{1\leq l\leq k\mid \forall z, l\neq p_z}),          \\
      \qquad \text{ if } \forall 1\leq l\leq n, \exists 1\leq p_l\leq k, \varepsilon_l \in L(E_{p_l}), \\
      \emptyset \text{ otherwise,}
    \end{cases}                                                                                                                                                         \\
    \partial_\alpha(E^{\circledast})                            & =
    \begin{cases}
      (E^\circledast\circ (\partial_\alpha(E)))\circ(\varepsilon_1, \mathrm{Inc}_{\varepsilon}(1,E^\circledast)) & \text{ if }\alpha\in\Sigma_0, \\
      (E^\circledast\circ (\partial_\alpha(E)))                                                                  & \text{ otherwise,}
    \end{cases}                                                                                                                                                         \\
    \partial_\alpha(E^{*_b})                                    & =
    \begin{cases}
      {(\partial_b(E))}^\circledast\cdot_b E^{*_b}                                 & \text{ if }\alpha=b, \\
      ({(\partial_b(E))}^\circledast \circ_1 (\partial_\alpha(E))) \cdot_b E^{*_b} & \text{otherwise,}    \\
    \end{cases}
  \end{align*}

  where
  \begin{itemize}
    \item \(\partial_{\alpha(\varepsilon_{j_1},\ldots,\varepsilon_{j_n})(E)}
          =
          \partial_{\alpha}(
          \partial_{\varepsilon_{j_{1} + n - 1}}(
          \cdots
          \partial_{\varepsilon_{j_{n-1} + 1}}(
          \partial_{\varepsilon_{j_n}}(E)
          )
          \cdots
          )
          )
          \),

    \item for all  integers~\(i\) the expression \(\underline{E_i}\) equals  \(\mathrm{Inc}_\varepsilon(1,E_i)\),

    \item
          \(
          \mathcal{E}\circ_1 \mathcal{E'} = \{E\circ_1 E'\mid (E,E')\in\mathcal{E}\times\mathcal{E'}\}
          \),

    \item \(\mathcal{E}\circ (E_1,\ldots,E_m)=\{E\circ(E_1,\ldots,E_m)\mid E\in\mathcal{E} \} \),

    \item \( E\circ_1 \mathcal{E} = \{E\circ_1 E' \mid E'\in\mathcal{E} \} \),

    \item \( \mathcal{E}\cdot_b E=\{E'\cdot_b E\mid E'\in\mathcal{E} \} \),

    \item \(\mathcal{E}^{\circledast}= \{{(\sum_{E\in\mathcal{E}}  E)}^{\circledast}\} \).
  \end{itemize}

  The extension of the partial derivation to the operation of negation can be achieved in a similar way to the composition closure: \( \partial(\neg E)=\{\neg(\sum_{E'\in \partial(E)} E')\}
  \).

  Another method is to define another partial derivation returning sets of sets of expressions, interpreted as clausal disjunctive forms (an union of intersection of languages)\cite{CCM11,CCM14}.
\end{definition}
To well define  the operation \(\circledast \), an order over the expressions  has to be considered. This  order  is important in order to distinguish  between semantic and syntax. As an example, let us consider the set $\{a,b\}^{\circledast}$; considering $a<b$ or $b<a$ leads to two distinct  (but equivalent) expressions, $(a+b)^{\circledast}$ and $(b+a)^{\circledast}$.
Moreover, we set \(\emptyset^{\circledast} = \{\varepsilon_1\} \).
Furthermore, the partial derivative of an expression set \(\mathcal{E}\) is defined by
\begin{equation*}
  \partial_{\alpha}(\mathcal{E})=\bigcup_{E\in\mathcal{E}}\partial_{\alpha}(E).
\end{equation*}
Finally, the partial derivative w.r.t.\ a tree can be defined as follows.
\begin{definition}\label{def deriv partial tree}
  Let \(t=f(t_1,\ldots,t_n)\) be a tree in \(T(\Sigma)\) and \(E\) a valid tree expression over \(\Sigma \) such that
  \(
  \mathrm{Ind}_\varepsilon(t)\subseteq \mathrm{Ind}_\varepsilon(E).
  \)
  The \emph{partial derivative} of \(E\) w.r.t. \( t \) is the set of tree expressions defined by
  \begin{equation*}
    \partial_t(E)=(\partial_f(\partial_{t'_1}(\cdots (\partial_{t'_k}(E))\cdots))\circ(\varepsilon_1,{(\varepsilon_{y_z+1})}_{1\leq z\leq n-l})),
  \end{equation*}
  where  \( \{y_1,\ldots,y_{n-l}\} = \mathrm{Ind}_\varepsilon(E)\setminus \mathrm{Ind}_\varepsilon(t) \) and \(\forall 1\leq j\leq  k \), \( t'_j=\mathrm{Inc}_\varepsilon(k-j,t_j) \).
\end{definition}
As a direct consequence of the inductive formulae of Section~\ref{sec:quotients form}, we get the following theorem.
\begin{theorem}\label{thm lang  part-deriv}
  Let \(t\) be a tree in \(T(\Sigma)\) and \(E\) a valid tree expression over \(\Sigma \).
  Then:
  \begin{equation*}
    \bigcup_{E'\in \partial_t(E)} L(E')=t^{-1}(L(E)).
  \end{equation*}
\end{theorem}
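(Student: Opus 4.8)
The plan is to mirror the three-step structure of the proof of Theorem~\ref{thm lang deriv}, replacing the single expression $d_\alpha(E)$ by the expression set $\partial_\alpha(E)$ and the syntactic sum by set union. Concretely, I would first establish, by structural induction on $E$, the symbol-level identity
\begin{equation*}
  \bigcup_{E'\in\partial_\alpha(E)} L(E') = \alpha^{-1}(L(E))
\end{equation*}
for every symbol $\alpha\in\Sigma_n$, and then bootstrap it to arbitrary trees $t$ through Definition~\ref{def deriv partial tree}, exactly as Definition~\ref{def deriv form tree} is used in the third step of Theorem~\ref{thm lang deriv}. Since $\partial_{\varepsilon_j}(E)=\{d_{\varepsilon_j}(E)\}$ is a singleton, the union of its languages is $L(d_{\varepsilon_j}(E))=\varepsilon_j^{-1}(L(E))$ by Equation~\eqref{eq quot eps}, so the $\varepsilon$-base case is free.

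The core of the argument is the observation that the map $\mathcal{E}\mapsto\bigcup_{E'\in\mathcal{E}}L(E')$ is a semilattice homomorphism with respect to every \emph{linear} language operation appearing in Section~\ref{sec:quotients form}. Indeed, the composition $\circ$, the partial composition $\circ_1$, and the $b$-product $\cdot_b$ are all defined as unions over tuples in Section~\ref{sec:prelim}, hence distribute over union in each argument; the restricted Boolean operator considered for partial derivatives is merely the sum, for which $\partial_\alpha(E_1+E_2)=\partial_\alpha(E_1)\cup\partial_\alpha(E_2)$ and distribution of $L$ over union is immediate. Consequently, for each of the cases $\emptyset_{\mathcal{I}}$, $\varepsilon_1$, $\alpha(\varepsilon_1,\ldots,\varepsilon_n)$, $f(E_1,\ldots,E_m)$, $E_1+E_2$, $E_1\cdot_b E_2$, and $E\circ(E_1,\ldots,E_k)$, taking the union of the languages over the set produced by Definition~\ref{ partial deriv partial symb} yields exactly the corresponding quotient formula of Section~\ref{sec:quotients form} with the inner sets replaced by their language-unions, and the induction hypothesis closes the case. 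The composition case $E\circ(E_1,\ldots,E_k)$ reproduces the branching of Equation~\eqref{eq:derivFormCirc}, including the side condition $\forall l\,\exists p_l,\ \varepsilon_l\in L(E_{p_l})$; this is handled exactly as in step~\ref{step2} of Theorem~\ref{thm lang deriv}, the only new bookkeeping being that the sum over $j$ becomes a union of sets.

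The delicate cases are the two closures $E^{\circledast}$ and $E^{*_b}$, where linearity fails: in general $\bigl(\bigcup_i L_i\bigr)^{\circledast}\neq\bigcup_i L_i^{\circledast}$, so one cannot simply distribute $L$ over the set. The resolution is built into Definition~\ref{ partial deriv partial symb}: the set closure is defined as the singleton $\mathcal{E}^{\circledast}=\{(\sum_{E\in\mathcal{E}}E)^{\circledast}\}$ (with $\emptyset^{\circledast}=\{\varepsilon_1\}$), which collapses the set into the closure of a single sum \emph{before} applying $\circledast$. Hence $\bigcup_{F\in\mathcal{E}^{\circledast}}L(F)=L\bigl((\sum_{E\in\mathcal{E}}E)^{\circledast}\bigr)=\bigl(\bigcup_{E\in\mathcal{E}}L(E)\bigr)^{\circledast}$, so that the union of languages of $\partial_\alpha(E^{\circledast})$ equals the inner union $\bigl(\bigcup_{E'\in\partial_\alpha(E)}L(E')\bigr)$ plugged into the composition-closure quotient formula (Proposition~$22$ of~\cite{CMOZ17}); by the induction hypothesis this inner union is $\alpha^{-1}(L(E))$, and the formula then returns $\alpha^{-1}(L(E^{\circledast}))$. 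The iterated product $E^{*_b}$ is treated identically using Proposition~$24$ of~\cite{CMOZ17}, and the optional negation operator $\neg$ is handled by the same collapse-to-a-sum device, consistent with Proposition~\ref{prop lang complem} and Corollary~\ref{cor:quot op bool}. One should also note that the ordering convention needed to well-define $\mathcal{E}^{\circledast}$ affects only the syntactic shape of $\sum_{E\in\mathcal{E}}E$, not its denoted language, so the computed union of languages is independent of that choice.

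Finally, I would assemble the tree case. For $t=f(t_1,\ldots,t_n)$, Definition~\ref{def deriv partial tree} expresses $\partial_t(E)$ through iterated symbol- and $\varepsilon$-derivatives followed by a composition with $(\varepsilon_1,(\varepsilon_{y_z+1})_z)$, syntactically identical to the expression defining $d_t(E)$ in Definition~\ref{def deriv form tree}. Taking the union of languages and using the symbol-level identity just proved (together with the $\varepsilon$-base case and the distribution of $L$ over the outer composition), the computation reduces to Equation~\eqref{eq quot lang by tree}, giving $\bigcup_{E'\in\partial_t(E)}L(E')=t^{-1}(L(E))$. I expect the closure cases to be the main obstacle: verifying that the deliberate collapse to a sum inside $\mathcal{E}^{\circledast}$ reproduces the nonlinear star faithfully is exactly the point at which the set semantics of partial derivatives and the additive semantics of Theorem~\ref{thm lang deriv} must be reconciled, whereas every other case is a routine transcription of the quotient formulae with $+$ replaced by $\cup$.
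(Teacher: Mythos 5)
Your proposal is correct and takes essentially the same route as the paper: the paper's own proof is a one-line reduction to Theorem~\ref{thm lang deriv}, citing exactly the observation you place at the core of your argument, namely that union distributes over composition (a consequence of Equation~\eqref{def compo lang}), so that the union-of-languages map commutes with all the operations appearing in the derivation formulae. Your explicit treatment of the closure cases via the collapse-to-sum device \(\mathcal{E}^{\circledast}=\{{(\sum_{E\in\mathcal{E}}E)}^{\circledast}\}\) fills in details the paper leaves implicit, but does not constitute a different approach.
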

\begin{proof}
  The proof is similar to the one of  Theorem~\ref{thm lang deriv},
  by first considering that union distributes over composition (as a direct consequence of Equation~\eqref{def compo lang})
  \begin{multline*}
    L\circ(L_1,\ldots,L_{i-1},L_i \cup L_{i'},L_{i+1},\ldots, L_n)\\
    =
    L\circ(L_1,\ldots,L_{i-1},L_i,L_{i+1},\ldots, L_n)
    \cup
    L\circ(L_1,\ldots,L_{i-1},L_{i'},L_{i+1},\ldots, L_n)
  \end{multline*}
  %\qed%
\end{proof}

%\begin{example}
\begin{example}
  Let us consider the graded alphabet defined by \( \Sigma_2=\{f\} \),  \( \Sigma_1=\{g\} \) and \( \Sigma_0=\{a,b\} \) and let \( E \) be the tree expression defined by

  \begin{equation*}
    E= E_1+E_2,
  \end{equation*}

  with \(E_1 = f(a,a+b) \) and \(E_2 = g(a)^{*_a}\cdot_af(b,a) \).

  Let us show how to calculate the derivative of \( E \) w.r.t. \(t = g(f(b,a))\).
  \begin{align*}
    \partial_a(E)             & = \partial_a(E_1)\cup \partial_a(E_2)                                                                                 \\
                              & =\{\,f(a,\varepsilon_1),\ f(\varepsilon_1,a+b)\,\}\cup  \{\,g(\varepsilon_1)^\circledast\circ(f(b,\varepsilon_1))\,\} \\
                              & =\{f(a,\varepsilon_1),\ f(\varepsilon_1,a+b),\ g(\varepsilon_1)^\circledast\circ(f(b,\varepsilon_1)) \}               \\
    \partial_b(\partial_a(E)) & = \bigcup_{E'\in \partial_a(E)}\partial_b(E')                                                                         \\
                              & =\{\, f(\varepsilon_2,\varepsilon_1),\ g(\varepsilon_1)^{\circledast}\circ(f(\varepsilon_1,\varepsilon_2)) \, \}      \\
    \partial_{f(b,a)}(E)      & = \partial_{f(\varepsilon_1,\varepsilon_2)}(\partial_b(\partial_a(E))                                                 \\
                              & =\{\,g(\varepsilon_1)^\circledast\, \}                                                                                \\
    \partial_{g(f(b,a))}(E)   & =\partial_{g(\varepsilon_1)}(\partial_{f(b,a)}(E))                                                                    \\
                              & =\{ \,g(\varepsilon_1)^\circledast \,\}.
  \end{align*}
  %\end{example}
\end{example}

\begin{example}
  \label{ex: partial derivative}
  Let us consider the graded alphabet defined by \( \Sigma_2=\{f\} \),  \( \Sigma_1=\{g\} \) and \( \Sigma_0=\{a\} \) and let \( E \) be the tree expression defined by
  \begin{equation*}
    E =  ({g(a)}^{*_a}) \cdot_a f(f(a,a),a).
  \end{equation*}
  Let us show how to calculate the derivative of \( E \) w.r.t. \(t = f(f(a,a),a)\).
  \begin{align*}
    \partial_a(E)                          & =\{\, g(\varepsilon_1)^\circledast \circ (f(f(a,\varepsilon_1),a)),\ g(\varepsilon_1)^\circledast \circ (f(f(a,a),\varepsilon_1),\                                                     \\
                                           & \qquad g(\varepsilon_1)^\circledast \circ (f(f(\varepsilon_1,a),a))\, \}                                                                                                               \\
    \partial_a(\partial_a(E))              & = \bigcup_{E'\in\partial_a(E)}\partial_a(E')                                                                                                                                           \\
                                           & =\{\,g(\varepsilon_1)^\circledast \circ (f(f(\varepsilon_1,\varepsilon_2),a)),\ g(\varepsilon_1)^\circledast \circ (f(f(a,\varepsilon_2),\varepsilon_1)),                              \\
                                           & \qquad g(\varepsilon_1)^\circledast \circ (f(f(\varepsilon_1,a),\varepsilon_2)),\ g(\varepsilon_1)^\circledast \circ (f(f(a,\varepsilon_1),\varepsilon_2)),                            \\
                                           & \qquad g(\varepsilon_1)^\circledast \circ (f(f(\varepsilon_2,\varepsilon_1),a),\ g(\varepsilon_1)^\circledast \circ (f(f(\varepsilon_2,a),\varepsilon_1)\,\}                           \\
    \partial_a(\partial_a(\partial_a(E)))= & \bigcup_{E''\in\partial_a(\partial_a(E))} \partial_a(E'')                                                                                                                              \\
                                           & =\{\,g(\varepsilon_1)^\circledast \circ (f(f(\varepsilon_2,\varepsilon_3),\varepsilon_1)),\, g(\varepsilon_1)^\circledast \circ (f(f(\varepsilon_1,\varepsilon_3),\varepsilon_2)),     \\
                                           & \qquad g(\varepsilon_1)^\circledast \circ (f(f(\varepsilon_2,\varepsilon_1),\varepsilon_3)),\  g(\varepsilon_1)^\circledast \circ (f(f(\varepsilon_1,\varepsilon_2),\varepsilon_3)),   \\
                                           & \qquad g(\varepsilon_1)^\circledast \circ (f(f(\varepsilon_3,\varepsilon_2),\varepsilon_1)),\ g(\varepsilon_1)^\circledast \circ (f(f(\varepsilon_3,\varepsilon_1),\varepsilon_2))\,\} \\
    \partial_{f(a,a)}(\partial_a(E))       & =\partial_{f(\varepsilon_1,\varepsilon_2)}(\partial_a(\partial_a(\partial_a(E))))\circ (\varepsilon_1,\varepsilon_2) \,                                                                \\
                                           & =\{\, g(\varepsilon_1)^\circledast\circ(f(\varepsilon_1,\varepsilon_4))\circ (\varepsilon_1,\varepsilon_2)\,\}                                                                         \\
    \partial_{f(f(a,a),a)}                 & = \partial_{f(\varepsilon_1,\varepsilon_2)}(\partial_{f(a,a)}(\partial_a(E)))                                                                                                          \\
                                           & =\{\,g(\varepsilon_1)^\circledast \,\}.
  \end{align*}
\end{example}

Let us now show that the computation of an automaton from the partial derivation cannot be achieved \emph{via} the same algorithm as the case of derivation.
Let us first try to extend the previous algorithm.
Given a tree expression \(E\) over an alphabet \(\Sigma \), we first compute the set
\begin{equation*}
  D_0(E)=\bigcup_{a\in\Sigma_0} \partial_a(E).
\end{equation*}
From this set, we compute the tree automaton \(A_0=(\Sigma,D_0(E),F_0,\delta_0)\) where
\begin{align*}
  F_0      & = \{E'\in D_0(E) \mid \varepsilon_1 \in L(E')\},        &
  \delta_0 & = \{(a, E') \mid a \in \Sigma_0, E'\in\partial_a(E) \}.
\end{align*}
From this step, one can choose a total function \(\mathrm{tree}_0\) associating any tree expression \(E'\) in \(D_0(E)\) with a tree \(t\) such that
\begin{equation*}
  \mathrm{tree}_0(E')= t \Rightarrow  E'\in \partial_t(E).
\end{equation*}
Notice that one can consider a \emph{greedy} choice, that is a choice that minimizes the number of trees in the codomain of the function \(\mathrm{tree}_0\).

From this induction basis, let us consider the transition set \(\delta_n\) inductively defined by
\begin{align}
  \begin{split}
    \delta_n = \{((E'_1, \ldots, E'_m), f, E') \mid
    & E'\in\partial_{f(t_1,\ldots,t_m)}(E), t_i = \mathrm{tree}_{n-1}(E'_i),\\
    & f \in \Sigma_m, E'_1,\ldots,E'_m \in D_{n-1}(E)\}.
    % \label{eq:delta}
  \end{split}
\end{align}
Let us consider the set
\(
D_n(E) = D_{n-1}(E) \cup \pi_3(\delta_n)
\),
where \(\pi_3\) is the classical projection defined by
\(
\pi_3(X) = \{z \mid (\_,\_,z) \in X\}
\).
Obviously, one can once again choose a total function \(\mathrm{tree}_n\) associating any tree expression \(E'\) in \(D_n(E)\) with a tree \(t\) such that
\begin{equation*}
  \mathrm{tree}_n(E') = t\Rightarrow E'\in \partial_t (E),
\end{equation*}
by choosing a transition \(((E'_1, \ldots, E'_m), f, E')\) in \(\delta_n\) for any tree expression \(E'\) in \(D_{n}(E)\setminus (D_{n-1}(E))\)
and defining \(t\) as \(f(\mathrm{tree}_{n-1}(E'_1), \ldots, \mathrm{tree}_{n-1}(E'_k))\).
Finally, by considering the set
\begin{equation}
  F_n = \{E' \in D_n(E) \mid \varepsilon_1 \in L(E')\},
  % \label{eq:final states}
\end{equation}
we can define the tree automaton \(A_n=(\Sigma,D_n(E),F_n,\delta_n)\).

Let us consider the expression \(E = f(a,a) + f(a,b) + f(b,a )\).
Then
\begin{align*}
  \delta_a(E) & = \{ f(\varepsilon_1,a), f(a, \varepsilon_1), f(\varepsilon_1,b), f(b, \varepsilon_1)\}, \\
  \delta_b(E) & = \{ f(\varepsilon_1,a), f(a, \varepsilon_1)\},                                          \\
  D_0         & = \{ f(\varepsilon_1,a), f(a, \varepsilon_1), f(\varepsilon_1,b), f(b, \varepsilon_1)\}.
\end{align*}
Let us choose \(\mathrm{tree}_0(E') = a\) for any expression \(E'\) in \(D_0\).
By construction, the set \(\delta_1\) contains the transition
\(((f(\varepsilon_1,a), f(a, \varepsilon_1)),f,\varepsilon_1)\), since
\(\mathrm{tree}_0(f(\varepsilon_1,a)) = \mathrm{tree}_0(f(a, \varepsilon_1)) = a\) and since \(\partial_{f(a,a)}(E)=\{\varepsilon_1\}\).
In this case, since there also exist the transitions \((b, f(\varepsilon_1,a))\) and \((b, f(a, \varepsilon_1))\), the tree \(f(b, b)\) is recognized by the automaton, that exhibits a witness of the difference with \(L(E)\).

\section{Web Application}\label{sec:appli}
The computation of a derivative and partial derivative, and both the construction of a derivative tree automaton and the classical non deterministic inductive construction have been implemented in Haskell (made in Haskell, compiled
in Javascript using the \href{https://github.com/reflex-frp/reflex-platform}{\textsc{reflex platform}}, represented with \href{https://github.com/mdaines/viz.js}{\textsc{viz.js}}) in order to help the reader to manipulate the notions.
This web application can be found \href{http://ludovicmignot.free.fr/programmes/BottomUpPartialDerivatives/index.html}{here}~\cite{AppWeb}.
As an example, the tree expression \(\neg({g(a)}^{*_a}) \cdot_a f(f(a,a),a)\) of the examples can be defined from the literal input \texttt{\(\neg\)(g[a]*a).af[f[a,a],a]}.

\section{Conclusion and Perspectives}

We have shown how to compute a derivative tree automaton as a fixed point of an inductive construction when it exists.
Even when it does not exist, the process can be used in order to solve the membership test (\emph{i.e.} does a tree belong to the language denoted by a tree expression?): it is easy to see that, for a tree \(t\) of height~\(h\), the tree automaton \(A_h\) is sufficient, since \(t\in L(A_h)\) if and only if \(t\in L(A)\).
As an example, consider the tree automaton \(A_1\) of Example~\ref{ex:cons}.
This tree automaton is sufficient to determine that the trees in \(T(\{f,g,b,c\})\) belong to \(L(E)\).
And even the subautomaton of \(A_h\) restricted to the transitions used in order to compute \(\Delta(t)\) is enough.
Moreover, due to the independence of the computations of derivatives, this process can be performed in a parallel/concurrent way.

Furthermore, we can wonder whether the choices of the \(\mathrm{tree}_*\) functions during the computation of the derivative automaton impact the produced automaton.
We conjecture that all of these choices lead to a unique automaton, and the statement of Proposition~\ref{prop Delta t dt} could be replaced by \( \Delta(t) = \{d_t(E)\} \).

Let us notice that this fixed point computation cannot be extended directly to deal with partial derivatives
but the partial derivation can be used to solve the membership test syntactically. This is not  the only loss of the extension of partial derivation from words to trees. It also seems that partial derivation tends to produce more expressions than
derivation. It could be the consequence of the distributions that occur when the product and the composition are applied to sets of expressions in conjunction with the composition closure  of sets of expressions, that seems to cancel the reduction
power  of this process.

Finally, the ACI rules  of the sum used here  are the same as used in the case of words,  the difficulties that we have found reside on the sufficiency of these rules  to  prove that the set of derivatives is  finite. Seen that we deal with symbols of rank $\geq 1 $ and with operations   like  the composition~($\circ$), the $a$-product~($\cdot_a$), the iterated composition ($\circledast$) which are somehow ``complicated''. However, all the examples that we have made (taking into account these rules) halt. So for now, this is only an hypothesis to justify. The study of the finiteness of the set of (similar) derivatives is the next step of our study: are the three ACI rules sufficient to obtain a finite set of derivatives?
Moreover, the same question arises as far as partial derivatives are concerned: unlike the word case, does the partial derivation  need reduction rules to produce a finite set of derived terms from an expression?
\bibliographystyle{plain}
\bibliography{biblio}

% \textbf{Ajouter le rafinement possible dans le Lemme 1: on peut penser que \( \Delta(t) = \{d_t(E)\} \), ce qui implique le Lemme 1, et qui fait que le choix des fonctions \(\mathrm{tree}_*\) n'a pas d'importance, toutes amènent au même automate.}
%%-----------------------------
%%      your bibliography
%%-----------------------------
\end{document}